\newtheorem{thm}{Theorem}[section]
\newtheorem{obs}[thm]{Observation}
\newtheorem{lem}[thm]{Lemma}
\theoremstyle{remark}
\newtheorem*{rem}{Remark}
\DeclareMathOperator{\Dim}{Dim}
\DeclareMathOperator{\dimH}{dim_H}
\DeclareMathOperator{\dimP}{dim_P}
\DeclareMathOperator{\proj}{proj}
\DeclareMathOperator{\sgn}{sgn}
\newcommand{\R}{\mathbb{R}}
\newcommand{\N}{\mathbb{N}}
\newcommand{\Q}{\mathbb{Q}}
\newcommand{\ve}{\varepsilon}
\newcommand{\uhr}{{\upharpoonright}}
\title{Projection Theorems Using Effective Dimension}
	\author{
	Neil Lutz\\
	Computer Science Department, Swarthmore College\\
			500 College Avenue,
			Swarthmore, PA 19081\\
	\texttt{lutz@cs.swarthmore.edu}
	\and
	D. M. Stull\footnote{Research supported in part by National Science Foundation Grants 1247051 and 1545028.}\\
	Department of Computer Science, Northwestern University\\
	Evanston, IL 50208, USA\\
	\texttt{donald.stull@northwestern.edu}
}	
\begin{document}

		\maketitle
		
		\begin{abstract}
		In this paper we use the theory of computing to study fractal dimensions of projections in Euclidean spaces. A fundamental result in fractal geometry is Marstrand's projection theorem, which states that for every analytic set $E$, for almost every line $L$, the Hausdorff dimension of the orthogonal projection of $E$ onto $L$ is maximal.
		
		We use Kolmogorov complexity to give two new results on the Hausdorff and packing dimensions of orthogonal projections onto lines. The first shows that the conclusion of Marstrand's theorem holds whenever the Hausdorff and packing dimensions agree on the set $E$, even if $E$ is not analytic. Our second result gives a lower bound on the packing dimension of projections of arbitrary sets. Finally, we give a new proof of Marstrand's theorem using the theory of computing.
		\end{abstract}
		
	\section{Introduction}\label{sec:intro}
The field of fractal geometry studies the fine-grained structure of irregular sets. Of particular importance are fractal dimensions, especially the Hausdorff dimension, $\dimH(E)$, and packing dimension, $\dimP(E)$, of sets $E\subseteq\R^n$. Intuitively, these dimensions are alternative notions of size that allow us to quantitatively classify sets of measure zero. The books of Falconer~\cite{Falc14} and Mattila~\cite{Matt99} provide an excellent introduction to this field.

A fundamental problem in fractal geometry is determining how projection mappings affect dimension~\cite{FalFraJin15, Mattila14}. Here we study orthogonal projections of sets onto lines. Let $e$ be a point on the unit $(n-1)$-sphere $S^{n-1}$, and let $L_e$ be the line through the origin and $e$. The \emph{projection of $E$ onto $L_e$} is the set
\[\proj_e E=\{e\cdot x:x\in E\}\,,\]
where $e\cdot x$ is the usual dot product, $\sum_{i=1}^n e_ix_i$, for $e=(e_1,\ldots,e_n)$ and $x=(x_1,\ldots,x_n)$.
We restrict our attention to lines through the origin because translating the line $L_e$ will not affect the Hausdorff or packing dimension of the projection.

Notice that $\proj_e E\subseteq\R$, so the Hausdorff dimension of $\proj_e E$ is at most $1$. It is also simple to show that $\dimH(\proj_e E)$ cannot exceed $\dimH(E)$~\cite{Falc14}. Given these bounds, it is natural to ask whether $\dimH(\proj_e E)=\min\{\dimH(E), 1\}$. Choosing $E$ to be a line orthogonal to $L_e$ shows that this equality does not hold in general. However, a fundamental theorem due to Marstrand~\cite{Marstrand54} states that, if $E\subseteq\R^2$ is a Borel set then for \emph{almost all} $e \in S^{1}$, the Hausdorff dimension of $\proj_e E$ is maximal. In fact, this theorem holds for the larger class of \emph{analytic sets}, which are continuous images of Borel sets. Subsequently, Mattila~\cite{Mattila75} showed that the conclusion of Marstrand's theorem also holds in higher-dimensional Euclidean spaces.
\begin{thm}[\cite{Marstrand54,Mattila75}]\label{thm:main1}
Let $E \subseteq \R^n$ be an analytic set with $\dimH(E) = s$. Then for almost every  $e \in S^{n-1}$, 
\[\dimH(\proj_e E) = \min\{s, 1\}\,.\]
\end{thm}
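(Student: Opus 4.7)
My plan is to give an effective-dimension proof of Theorem~\ref{thm:main1} via the \emph{point-to-set principle} of J.~Lutz and N.~Lutz, which characterizes Hausdorff dimension pointwise: for every $F \subseteq \R^m$,
\[
\dim_H(F) = \min_{A \subseteq \N} \sup_{x \in F} \dim^A(x),
\]
where $\dim^A(x) = \liminf_{r \to \infty} K^A_r(x)/r$ is the effective Hausdorff dimension of $x$ relative to the oracle $A$. An immediate consequence is that $\dim_H(F) \ge c$ if and only if $\sup_{x \in F} \dim^A(x) \ge c$ for every oracle $A$; this reduces the geometric theorem to a pointwise complexity question.

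The centerpiece is a \emph{pointwise projection lemma}: for every $x \in \R^n$ and every oracle $A$, for almost every direction $e \in S^{n-1}$,
\[
\dim^{A, e}(e \cdot x) \ge \min\bigl\{\dim^A(x),\, 1\bigr\}.
\]
Granted this lemma, Theorem~\ref{thm:main1} follows by chaining two applications of the point-to-set principle. Fix any oracle $B$ and any $\varepsilon > 0$. The principle applied to $E$ produces $x \in E$ with $\dim^B(x) \ge s - \varepsilon$. The pointwise lemma with $A = B$ yields $\dim^{B, e}(e \cdot x) \ge \min\{s - \varepsilon, 1\}$ for almost every $e$, and the oracle-monotonicity inequality $\dim^B \ge \dim^{B, e}$ gives $\dim^B(e \cdot x) \ge \min\{s,1\} - \varepsilon$. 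Thus for each $B$ and almost every $e$, $\sup_{y \in \proj_e E} \dim^B(y) \ge \min\{s, 1\}$; swapping these quantifiers (to obtain almost every $e$ and every $B$) lets the point-to-set principle applied to $\proj_e E$ yield $\dim_H(\proj_e E) \ge \min\{s, 1\}$. The quantifier swap is achieved by strengthening the lemma so that its null set depends only on $x$ (a routine Kolmogorov-complexity uniformity argument) and then taking the union over a countable sequence of witnesses to the supremum defining $\dim_H(E)$.

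The main technical obstacle is the pointwise projection lemma. My approach is a Borel--Cantelli estimate at each precision $r$. Fix $\alpha < \min\{\dim^A(x), 1\}$: if $K^{A, e}_r(e \cdot x) < \alpha r$, then $e \cdot x$ lies within $2^{-r}$ of one of at most $2^{\alpha r}$ possible program outputs, which in turn confines $x$ to one of at most $2^{\alpha r}$ slabs of width $\sim 2^{-r}$ orthogonal to $e$. The geometric input I need is that, for a fixed $x$ of order-one norm, the spherical measure of directions $e$ whose orthogonal slab of width $2^{-r}$ about a prescribed center contains $x$ is $O(2^{-r})$; multiplying by $2^{\alpha r}$ and summing over $r$ yields a summable series once $\alpha < 1$, so Borel--Cantelli rules out all but a null set of directions. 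The delicate step will be making this count uniform in the programs and in the scale of $x$. Notably, the argument never invokes the analyticity of $E$---precisely the feature that, elsewhere in the paper, lets the authors reach beyond the classical hypothesis.
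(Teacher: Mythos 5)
Your overall strategy---reducing to a pointwise complexity lower bound via the point-to-set principle---is in the spirit of the paper, and the pointwise projection lemma you state is indeed the right kind of workhorse. But the quantifier swap you call routine is in fact impossible, and the observation that your argument never invokes analyticity is itself fatal: Davies' construction, cited in the paper's introduction, gives $E\subseteq\R^2$ with $\dim_H(E)=2$ yet $\dim_H(\proj_e E)=0$ for almost every $e$, so no proof of Marstrand's theorem for \emph{arbitrary} $E$ can be correct. The specific failure is that the strengthened lemma ``for every $x$, for almost every $e$, for every $A$, $\dim^{A,e}(e\cdot x)\geq\min\{\dim^A(x),1\}$'' is false. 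To apply the point-to-set principle to $\proj_e E$ you must handle a minimizing oracle $A$ that in general depends on $e$. Take $x\in\R^2$ random, and for each $e\in S^1$ let $A_e$ encode the binary expansion of $e\cdot x$. Then $\dim^{A_e,e}(e\cdot x)=0$, while $K^{A_e}_r(x)\geq K_r(x)-r-O(\log r)\geq r-O(\log r)$ (a single real reveals at most $r+O(\log r)$ bits at precision $r$), so $\dim^{A_e}(x)\geq 1$; the claimed inequality fails for \emph{every} $e$, not a null set. The proposed union over a countable sequence of witnesses does not rescue this: you would need one countable family of points in $E$ that works against the uncountably many $e$-dependent oracles, and Davies' example shows no such family exists in general.

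The paper escapes exactly this circularity by exploiting analyticity. Lemma~\ref{lem:Fsigma} extracts a $\mathbf{\Sigma}^0_2$ subset $F\subseteq E$ of full Hausdorff dimension, and Theorem~\ref{thm:strongPointToSetDim} (the restricted Lutz--Hitchcock point-to-set principle) gives an oracle $A$, determined by the $\Sigma^0_2$ representation of $F$ and independent of $e$, that characterizes $\dim_H(F)$. Observation~\ref{obs:compClosedOracle} then shows $\proj_e F$ is $\Sigma^0_2$ relative to $(A,e)$, so the oracle testing $\dim_H(\proj_e F)$ is the \emph{fixed} $A$ augmented only by $e$ itself. One can therefore pick witness points $z_k\in F$ against $A$ first, and only then choose $e$ random relative to $(A,z_k)$, with the Calude--Zimand independence theorem (Theorem~\ref{thm:CalZimInd}) ensuring $e$ does not lower $K^A_r(z_k)$; this is precisely the quantifier order your proposal cannot achieve without some hypothesis on $E$. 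A secondary concern: in your Borel--Cantelli sketch the $2^{\alpha r}$ ``possible program outputs'' are produced by machines with oracle access to $e$, so the slab centers vary with $e$ and the spherical measure estimate you need is not immediate; this is likely repairable, but the quantifier gap above is the real obstruction.
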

In recent decades, the study of projections have become increasingly central to fractal geometry~\cite{FalFraJin15}. The most prominent technique has been the potential theoretic approach of Kaufman~\cite{Kaufman68}. While this is a very powerful tool in studying the dimension of a set, it requires that the set be analytic. We will show that techniques from theoretical computer science can circumvent this requirement in some cases.

Our approach to this problem is rooted in the effectivizations of Hausdorff dimension~\cite{Lutz03a} by J. Lutz and of packing dimension by Athreya et al.~\cite{AHLM07}. The original purpose of these effective dimension concepts was to quantify the size of complexity classes, but they also yield geometrically meaningful definitions of dimension for \emph{individual points} in $\R^n$~\cite{LutMay08}. More recently, J. Lutz and N. Lutz established a bridge from effective dimensions back to classical fractal geometry by showing that the Hausdorff and packing dimensions of a set $E \subseteq \R^n$ are characterized by the corresponding effective dimensions of the individual points in $E$, taken relative to an appropriate oracle~\cite{LutLut18}.

This result, a \emph{point-to-set principle} (Theorem~\ref{thm:p2s} below), allows researchers to use tools from algorithmic information theory to study problems in classical fractal geometry. Although this connection has only recently been established, there have been several results demonstrating the usefulness of the point-to-set principle: J. Lutz and N. Lutz~\cite{LutLut18} applied it to give a new proof of Davies' theorem~\cite{Davi71} on the Hausdorff dimension of Kakeya sets in the plane; N. Lutz and Stull~\cite{LutStu20} applied it to the dimensions of points on lines in $\R^2$ to give improved bounds on generalized Furstenberg sets; and N. Lutz~\cite{Lutz21} used it to show that a fundamental bound on the Hausdorff dimension of intersecting fractals holds for arbitrary sets.

In this paper, we use algorithmic information theory, via the point-to-set principle, to study the Hausdorff and packing dimensions of orthogonal projections onto lines. Given the statement of Theorem~\ref{thm:main1}, it is natural to ask whether the requirement that $E$ is analytic can be removed. Without further conditions, it cannot; Davies~\cite{Davies79} showed that, assuming the continuum hypothesis, there are non-analytic sets for which Theorem~\ref{thm:main1} fails. Indeed, Davies constructed a set $E\subseteq \R^2$ such that $\dimH(E)=1$ but $\dimH(\proj_e E)=0$ for \textit{every} $e \in S^1$.

Our first main theorem shows that if the Hausdorff and packing dimensions of $E$ agree, then we can remove the requirement that $E$ is analytic.
\begin{thm}\label{thm:main2}
Let $E \subseteq \R^n$ be any set with $\dimH(E) = \dimP(E) = s$. Then for almost every $e \in S^{n-1}$, 
\[\dimH(\proj_e E) = \min\{s, 1\}\,.\]
\end{thm}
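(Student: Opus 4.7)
The plan is to convert the theorem, via the point-to-set principle (Theorem~\ref{thm:p2s}), into a pointwise statement about effective dimensions of projected points, and then to exploit the hypothesis $\dim_H(E)=\dim_P(E)$ to obtain the required complexity bounds at \emph{every} scale rather than merely along subsequences.

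First I would fix a single oracle $A\subseteq\N$ that is simultaneously optimal for $E$ in both point-to-set characterizations (obtained by taking the join of the two individual optimizers), so that
\[\sup_{x\in E}\dim^A(x)=\sup_{x\in E}\Dim^A(x)=s\,.\]
For every $\varepsilon>0$ this furnishes a point $x_\varepsilon\in E$ with $\dim^A(x_\varepsilon)>s-\varepsilon$ and $\Dim^A(x_\varepsilon)\leq s$, so the effective Hausdorff and packing dimensions of $x_\varepsilon$ relative to $A$ differ by at most $\varepsilon$. This regularity forces the complexity profile $r\mapsto K_r^A(x_\varepsilon)/r$ to hover near $s$ at every sufficiently large precision, not just on a subsequence.

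The technical core would be a pointwise projection bound of the shape
\[\dim^{A,e}(e\cdot x)\geq\min\{\dim^A(x),1\}-O\bigl(\Dim^A(x)-\dim^A(x)\bigr)\,,\]
valid for every $x\in\R^n$ whenever $e\in S^{n-1}$ is Martin-L\"of random relative to $A$. The argument would relate $K_r^{A,e}(e\cdot x)$ to $K_r^A(x)$ through a scale-by-scale geometric estimate: when $e$ is $A$-random, no $A$-describable direction along which $K_r^A(x)$ might be concentrated can be aligned with $e$, so the projection loses only a negligible amount of information; the equal-dimension hypothesis rules out anomalous scales at which $K_r^A(x)$ spikes and could otherwise spoil this.

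Given the bound, the theorem follows quickly. For almost every $e\in S^{n-1}$ (namely, those $A$-random), letting $\varepsilon\to0$ produces points $e\cdot x_\varepsilon\in\proj_e E$ whose effective dimension relative to $(A,e)$ tends to $\min\{s,1\}$, so applying the point-to-set principle to $\proj_e E$ with oracle $(A,e)$ yields $\dim_H(\proj_e E)\geq\min\{s,1\}$. The matching upper bound $\dim_H(\proj_e E)\leq\min\{s,1\}$ is classical.

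The main obstacle is the pointwise projection bound. Keeping the exceptional set of $e$'s to a \emph{single} null set — independent of the parameter $\varepsilon$ and of the witnessing point $x_\varepsilon$ — demands that $A$-randomness of $e$ alone suffice; one cannot afford to relativize to each $x_\varepsilon$ separately. This is precisely where the $\dim_H=\dim_P$ hypothesis is indispensable, since the complexity-theoretic argument breaks down for points exhibiting irregular scale behavior, which for non-analytic $E$ cannot in general be avoided by other means.
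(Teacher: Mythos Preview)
Your use of the point-to-set principle at the end is in the wrong direction. The principle says $\dim_H(\proj_e E)=\min_{C}\sup_{y\in\proj_e E}\dim^C(y)$, so exhibiting a point $e\cdot x_\varepsilon$ with large $\dim^{A,e}$-dimension only yields the \emph{upper} bound $\dim_H(\proj_e E)\leq\sup_y\dim^{A,e}(y)$; it does not give the lower bound you claim. To obtain $\dim_H(\proj_e E)\geq\min\{s,1\}$ you must show that for the \emph{minimizing} oracle $C$---which depends on $e$ and need not be $(A,e)$---there is a point of $\proj_e E$ with $\dim^C$-dimension close to $\min\{s,1\}$. If you commit to the witness $x_\varepsilon$ before $C$ is fixed, nothing prevents $C$ from encoding $e\cdot x_\varepsilon$ outright and forcing $\dim^C(e\cdot x_\varepsilon)=0$.

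The paper's proof repairs this by a different ordering of quantifiers: first fix $B$ optimal for $E$; then take $e$ random relative to $B$; then let $A$ be the minimizing oracle for $\proj_e E$; and only \emph{then} choose the witness $z_{\varepsilon'}\in E$ with $\dim^{A,B,e}(z_{\varepsilon'})\geq s-\varepsilon'/4$ (the point-to-set principle for $E$ guarantees such a point, since $\sup_{x\in E}\dim^{C'}(x)\geq\dim_H(E)$ for every oracle $C'$). The hypothesis $\dim_P(E)=s$ then supplies the matching upper bound $\Dim^B(z_{\varepsilon'})\leq s$, so this \emph{a posteriori} witness has regular complexity at every scale relative to $B$, and the projection estimate (Theorem~\ref{thm:mainengine}, relativized to $B$, with $A$ in the role of the extra oracle) applies at all large $r$. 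Your intuition about why $\dim_H=\dim_P$ matters is correct; the gap is in where the oracle for the projection enters the argument.
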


Our second main theorem applies to projections of \emph{arbitrary} sets. Davies' construction precludes any non-trivial lower bound on the Hausdorff dimension of projections of arbitrary sets, but we are able to give a lower bound on the \emph{packing} dimension.
\begin{thm}\label{thm:main3}
Let $E \subseteq \R^n$ be any set with $\dimH(E) = s$. Then for almost every $e \in S^{n-1}$, 
\[\dimP(\proj_e E) \geq \min\{s, 1\}\,.\] 
\end{thm}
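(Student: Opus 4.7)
The plan is to follow the same blueprint as Theorem~\ref{thm:main2}, but replace the Hausdorff target on the projection side with packing dimension. Since packing dimension is a $\limsup$ quantity, it tolerates a projection estimate that controls the complexity of $e\cdot x$ at only infinitely many precision scales, rather than at every large scale. This relaxation is precisely what removes the regularity hypothesis $\dim_H(E)=\dim_P(E)$ required by Theorem~\ref{thm:main2}, while still respecting Davies' obstruction, which forbids the analogous bound for Hausdorff dimension.

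By a countable-union argument over $\varepsilon = 1/k$, it suffices to prove that for each $\varepsilon > 0$ the set
\[N_\varepsilon = \{e \in S^{n-1} : \dim_P(\proj_e E) < \min(s,1) - \varepsilon\}\]
has measure zero. Fix such an $\varepsilon$ and suppose, toward a contradiction, that $e \in N_\varepsilon$. Applying Theorem~\ref{thm:p2s} to $\proj_e E$ supplies an oracle $B = B(e)$ with
\[\sup_{y \in \proj_e E}\Dim^B(y) < \min(s,1) - \varepsilon,\]
and in particular $\Dim^B(e\cdot x) < \min(s,1) - \varepsilon$ for every $x \in E$. Next, applying Theorem~\ref{thm:p2s} to $E$ itself with the relativized oracle $B \oplus e$, and using $\dim_H(E)=s$, we may choose $x = x(e) \in E$ with $\dim^{B \oplus e}(x) > s - \varepsilon/2$; that is, $K^{B \oplus e}_r(x) \geq (s-\varepsilon)r$ for all sufficiently large $r$.

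The key remaining ingredient is an effective \emph{packing projection inequality}: for almost every $e \in S^{n-1}$, for every oracle $A$, and every $x \in \R^n$,
\[\Dim^A(e\cdot x) \;\geq\; \min\bigl(\dim^{A \oplus e}(x),\, 1\bigr).\]
Applied to our $B$ and $x$, this yields $\Dim^B(e\cdot x) \geq \min(s-\varepsilon/2, 1) > \min(s,1) - \varepsilon$, contradicting the choice of $B$; hence $N_\varepsilon$ is null, as required.

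The main obstacle is establishing this inequality with a single null set of exceptional directions that works simultaneously for every oracle $A$ and every point $x$, possibly chosen depending on $e$. The intuition is that for generic $e$, knowing $e$ and $e\cdot x$ pins $x$ down to a hyperplane of $(n-1) r$ bits at scale $r$, so $K^{A \oplus e}_r(x) \leq K^A_r(e\cdot x) + (n-1)r + O(\log r)$; sharpening this to the form above requires upgrading the bound into one that saturates at the $\R$-ceiling of $r$ bits, which fails only at a set of precisions of density zero. For a fixed $(A,x)$, the set of $e$ for which the inequality fails at \emph{every} large scale is null by a union bound over scales, since an unusually short description of $e\cdot x$ localizes $e$ into a spherical set of measure $2^{-\Omega(r)}$. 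Upgrading to uniformity over $(A,x)$, so that the same null set of bad $e$ works whenever $x$ is chosen as a function of $e$ and $B$, is the delicate step; it is typically handled by a Fubini-style argument coupled with the observation that the relevant exceptional directions are captured by a universal Martin--L\"of test, so no quantification over uncountably many oracles is actually required.
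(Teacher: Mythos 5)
Your high-level reading of why packing dimension is more forgiving than Hausdorff dimension here is exactly right: since $\Dim$ is a $\limsup$, the complexity estimate on $e\cdot x$ only needs to hold at infinitely many scales, which is what lets the regularity hypothesis of Theorem~\ref{thm:main2} be dropped. The paper's proof is built on precisely this observation. However, the ``effective packing projection inequality'' you posit as the key ingredient is \emph{false as stated}. Fix any $e$, take $n\ge 2$, let $x$ be random in the hyperplane $\{w : e\cdot w = e\cdot x\}$ relative to $(A,e)$, and let $A$ encode $e\cdot x$. Then $\Dim^A(e\cdot x)=0$, yet $\dim^{A\oplus e}(x)\ge n-1\ge 1$, so your claimed bound $\Dim^A(e\cdot x)\ge\min(\dim^{A\oplus e}(x),1)$ fails. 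What is missing is precisely the hypothesis that the paper makes explicit in condition (2) of Theorem~\ref{thm:mainengine}: the pair $(A,e)$ must not significantly lower the complexity of $x$, i.e.\ $K^{A,e}_r(x)\ge K_r(x)-\ve' r$. No amount of Fubini or Martin--L\"of machinery will make the universally-quantified version true, because the condition that fails in the counterexample is not about $e$ being nonrandom --- it is about $A$ being correlated with $x$.

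There is a second, related problem with the order of quantifiers. You choose $e$ first and then extract the oracle $B=B(e)$ from the point-to-set principle applied to $\proj_e E$. With $B$ depending on $e$, you cannot arrange for $e$ to be algorithmically random relative to $B$, which is the direction-randomness condition the paper needs (condition (1) of Theorem~\ref{thm:mainengine}). The paper sidesteps both issues by reversing the order: first take a \emph{fixed} oracle $B$ testifying (via Theorem~\ref{thm:p2s}) to $\dim_H(E)$; then restrict to $e$ random relative to $B$ (a full-measure condition not involving the projection oracle); then take $A$ testifying to $\dim_P(\proj_e E)$; and finally choose $z$ with $\dim^{A,B,e}(z)\ge s-\ve'/4$. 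Crucially, $\dim^B(z)\le s$ because $B$ characterizes $\dim_H(E)$, and sandwiching $\dim^{A,B,e}(z)\le\dim^B(z)\le s$ against the lower bound yields $K^{A,B,e}_r(z)\ge K^B_r(z)-\ve' r$ at infinitely many scales $r$, which is exactly condition (2). Your choice of $x$ gives only the lower bound $\dim^{B\oplus e}(x)>s-\ve/2$ and no matching upper bound, so you never verify this condition even in your special case. The fix is to mirror the paper's quantifier order and to replace your universal lemma with the conditional statement of Theorem~\ref{thm:mainengine}.
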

Lower bounds on the packing dimension of projections has been extensively studied for restricted classes of sets such as Borel and analytic sets~\cite{FH96, FH97, FM96, Howroyd01, Orp15}. This is the first non-trivial lower bound of this type for arbitrary sets. It is known that the direct analogue of Marstrand's theorem for packing dimension does \emph{not} hold~\cite{Jarv94}.

'Our other contribution is a new proof of Marstrands projection theorem Theorem \ref{thm:main1}. In addition to showing the power of theoretical computer science in geometric measure theory, this proof introduces a new technique for further research in this area. We show that the assumption that $E$ is analytic allows us to use an earlier, restricted point-to-set principle due to J. Lutz~\cite{Lutz03a} and Hitchcock~\cite{Hitc05}. While less general than that of J. Lutz and N. Lutz, it is sufficient for this application and involves a simpler oracle. Informally, this allows us to reverse the order of quantifiers in the statement of Theorem \ref{thm:main1}. This will be beneficial for further research in this area, and it is a step toward clarifying the role of the assumption that $E$ is analytic.

Finally, we remark that our three main theorems have natural generalizations to projecting onto higher dimensional subspaces and that these generalizations are achievable via straightforward modifications of the proofs given here. However, we restrict ourselves to lines for the sake of readability.

\section{Preliminaries}\label{sec:prelim}

We begin with a brief description of algorithmic information quantities and their relationships to Hausdorff and packing dimensions.

\subsection{Kolmogorov Complexity in Discrete and Continuous Domains}
The \emph{conditional Kolmogorov complexity} of binary string $\sigma\in\{0,1\}^*$ given a binary string $\tau\in\{0,1\}^*$ is the length of the shortest program $\pi$ that will output $\sigma$ given $\tau$ as input. Formally, the conditional Kolmogorov complexity of $\sigma$ given $\tau$ is
\[K(\sigma\mid\tau)=\min_{\pi\in\{0,1\}^*}\left\{\ell(\pi):U(\pi,\tau)=\sigma\right\}\,,\]
where $U$ is a fixed universal prefix-free Turing machine and $\ell(\pi)$ is the length of $\pi$. Any $\pi$ that achieves this minimum is said to \emph{testify} to, or be a \emph{witness} to, the value $K(\sigma\mid\tau)$. The \emph{Kolmogorov complexity} of a binary string $\sigma$ is $K(\sigma)=K(\sigma\mid\lambda)$, where $\lambda$ is the empty string.	These definitions extend naturally to other finite data objects, e.g., vectors in $\Q^n$, via standard binary encodings; see~\cite{LiVit08} for details.

Kolmogorov complexity can be further extended to points in Euclidean space, as we now describe. The \emph{Kolmogorov complexity} of a point $x\in\R^m$ at \emph{precision} $r\in\N$ is the length of the shortest program $\pi$ that outputs a \emph{precision-$r$} rational estimate for $x$. Formally, this is 
\[K_r(x)=\min\left\{K(p)\,:\,p\in B_{2^{-r}}(x)\cap\Q^m\right\}\,,\]
where $B_{\ve}(x)$ denotes the open ball of radius $\ve$ centered on $x$. The \emph{conditional Kolmogorov complexity} of $x$ at precision $r$ given $y\in\R^n$ at precision $s\in\R^n$ is
\[K_{r,s}(x\mid y)=\max\big\{\min\{K_r(p\mid q)\,:\,p\in B_{2^{-r}}(x)\cap\Q^m\}\,:\,q\in B_{2^{-s}}(y)\cap\Q^n\big\}\,.\]
When the precisions $r$ and $s$ are equal, we abbreviate $K_{r,r}(x\mid y)$ by $K_r(x\mid y)$. As a matter of notational convenience, if we are given a non-integral positive real as a precision parameter, we will always round up to the next integer. For example, $K_{r}(x)$ denotes $K_{\lceil r\rceil}(x)$ whenever $r\in(0,\infty)$.

By letting the underlying fixed prefix-free Turing machine $U$ be a universal \emph{oracle} machine, we may \emph{relativize} the definitions in this section to an arbitrary oracle set $A \subseteq \N$. The definitions of $K^A(\sigma\mid \tau)$, $K^A(\sigma)$, $K^A_r(x)$, $K^A_r(x\mid y)$, $\dim^A(x)$, $\Dim^A(x)$  $\dim^A(x\mid y)$, and $\Dim^A(x\mid y)$ are then all identical to their unrelativized versions, except that $U$ is given oracle access to $A$. We will frequently consider the complexity of a point $x \in \R^n$ \emph{relative to a point} $y \in \R^m$, i.e., relative to an oracle set $A_y$ that encodes the binary expansion of $y$ is a standard way. We then write $K^y_r(x)$ for $K^{A_y}_r(x)$.

One of the most useful properties of Kolmogorov complexity is that it obeys the \emph{symmetry of information}. That is, for every $\sigma, \tau \in \sigma\in\{0,1\}^*$,
\[K(\sigma, \tau) = K(\sigma) + K(\tau \mid \sigma, K(\sigma)) + O(1)\,.\]

We will need the following technical lemmas which show that versions of the symmetry of information hold for Kolmogorov complexity in $\R^n$. The first Lemma~\ref{lem:unichain} was proved in our previous work~\cite{LutStu20}.
\begin{lem}[\cite{LutStu20}]\label{lem:unichain}
	For every $m,n\in\N$, $x\in\R^m$, $y\in\R^n$, and $r,s\in\N$ with $r\geq s$,
	\begin{enumerate}
		\item[\textup{(i)}]$\displaystyle |K_r(x\mid y)+K_r(y)-K_r(x,y)\big|\leq O(\log r)+O(\log\log \|y\|)\,.$
		\item[\textup{(ii)}]$\displaystyle |K_{r,s}(x\mid x)+K_s(x)-K_r(x)|\leq O(\log r)+O(\log\log\|x\|)\,.$
	\end{enumerate}
Moreover, this proof relativizes, and so both (i) and (ii) hold relative to any oracle.
\end{lem}
	
\begin{lem}\label{lem:symmetry}
	Let $m,n\in\N$, $x\in\R^m$, $z\in\R^n$, $\ve > 0$ and $r\in\N$. If $K^x_r(z) \geq K_r(z) - \ve r$, then the following hold for all $s \leq r$.
	\begin{enumerate}
		\item[\textup{(i)}]$\displaystyle |K^x_s(z) - K_s(z)|\leq \ve r - O(\log r)\,.$
		\item[\textup{(ii)}]$\displaystyle |K_{s, r}(x \mid z) - K_s(x)|\leq \ve r - O(\log r)\,.$
	\end{enumerate}
\end{lem}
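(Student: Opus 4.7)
The plan is to prove parts (i) and (ii) separately, using both chain rules in Lemma \ref{lem:unichain} together with a handful of elementary observations: (a) an oracle never increases complexity, so $K^x_r(\cdot) \leq K_r(\cdot) + O(1)$; (b) a rational approximation of $x$ at any precision can be computed from the oracle $x$, so $K^x_r(z) \leq K_r(z \mid x) + O(\log r)$ (the $\log r$ suffices to encode the precision parameter); and (c) Lemma \ref{lem:unichain}(ii) relativizes in a straightforward manner, since its proof is a generic ``refine an approximation'' argument.

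Part (i) should follow by applying Lemma \ref{lem:unichain}(ii) both unrelativized and relativized to $x$. The unrelativized version gives $K_r(z) = K_s(z) + K_{r,s}(z \mid z) + O(\log r)$, and the relativized version gives $K^x_r(z) = K^x_s(z) + K^x_{r,s}(z \mid z) + O(\log r)$. Because adding $x$ as an oracle cannot increase the cost of refining an approximation, $K^x_{r,s}(z \mid z) \leq K_{r,s}(z \mid z) + O(1)$. Subtracting the two identities and substituting the hypothesis yields $K_s(z) - K^x_s(z) \leq K_r(z) - K^x_r(z) + O(\log r) \leq \ve r + O(\log r)$; the reverse bound $K^x_s(z) \leq K_s(z) + O(1)$ is immediate from (a).

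Part (ii) requires two substantive steps. First, I would upgrade the hypothesis to a statement about conditional complexity at precision $r$: by the symmetry of information in Lemma \ref{lem:unichain}(i), $K_r(x \mid z) = K_r(z \mid x) + K_r(x) - K_r(z) + O(\log r)$, and combining observation (b) with the hypothesis gives $K_r(z \mid x) \geq K^x_r(z) - O(\log r) \geq K_r(z) - \ve r - O(\log r)$, so $K_r(x \mid z) \geq K_r(x) - \ve r - O(\log r)$. Second, I would step the precision of $x$ down from $r$ to $s$ via a mixed-precision chain rule: the cost of describing $x$ at precision $r$ given $z$ at precision $r$ is at most the cost of describing $x$ at precision $s$ given $z$ at precision $r$, plus the cost of refining $x$ from precision $s$ to precision $r$ (a step that needs only $x$ at precision $s$, not $z$). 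This gives $K_r(x \mid z) \leq K_{s,r}(x \mid z) + K_{r,s}(x \mid x) + O(\log r)$. Substituting $K_{r,s}(x \mid x) = K_r(x) - K_s(x) + O(\log r)$ from Lemma \ref{lem:unichain}(ii) and rearranging yields $K_{s,r}(x \mid z) \geq K_s(x) - \ve r - O(\log r)$; the upper bound $K_{s,r}(x \mid z) \leq K_s(x) + O(1)$ is trivial.

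The main obstacle will be rigorously justifying the mixed-precision chain rule $K_r(x \mid z) \leq K_{s,r}(x \mid z) + K_{r,s}(x \mid x) + O(\log r)$ and observation (b) at the required $O(\log r)$ level of precision. Both rely on a careful treatment of the max--min definition of $K_{r,s}$ over rational approximations: for (b) one must show that oracle access to $x$ allows one to simulate the algorithm witnessing the worst-case $q \in B_{2^{-r}}(x) \cap \Q^m$, and for the chain rule one must ensure that the two-phase description can be concatenated without losing more than $O(\log r)$ bits. Both are natural adaptations of the constructions used to prove Lemma \ref{lem:unichain}, but they demand careful bookkeeping of which rational approximations are available to each subroutine.
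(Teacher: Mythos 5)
Your proposal matches the paper's proof in both structure and substance: part (i) subtracts the unrelativized and $x$-relativized instances of Lemma~\ref{lem:unichain}(ii) and discards the $K_{r,s}(z\mid z)$ refinement terms, and part (ii) first converts the hypothesis into a lower bound on $K_r(x\mid z)$ via Lemma~\ref{lem:unichain}(i), then steps the precision of $x$ down from $r$ to $s$ using the chain rule $K_r(x\mid z) = K_{s,r}(x\mid z) + K_{r,s,r}(x\mid x,z) + O(\log r)$ together with Lemma~\ref{lem:unichain}(ii) for $K_{r,s}(x\mid x)$. The technical points you flag — observation (b), that oracle access to $x$ can simulate the worst-case rational approximant witnessing $K_{r}(\cdot\mid x)$, and the conditional/mixed-precision version of the chain rule — are precisely the implicit steps the paper's proof relies on without spelling out, so your concern is well placed but not a gap in the argument.
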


\begin{proof}
We first prove item (i). By Lemma \ref{lem:unichain}(ii),
\begin{align*}
\ve r &\geq K_r(z) - K^x_r(z)\\
&\geq K_s(z) + K_{r, s}(z \mid z) - (K^x_s(z) + K^x_{r, s}(z \mid z)) - O(\log r)\\
&\geq K_s(z) - K^x_s(z)+ K_{r, s}(z \mid z) - K^x_{r, s}(z \mid z) - O(\log r)\,.
\end{align*}
Rearranging, this implies that 
\begin{align*}
K_s(z) - K^x_s(z) &\leq \ve r +K^x_{r, s}(z \mid z) - K_{r, s}(z \mid z) + O(\log r) \\
&\leq \ve r + O(\log r)\,,
\end{align*}
and the proof of item (i) is complete. 

To prove item (ii), by Lemma \ref{lem:unichain}(i) we have
\begin{align*}
\ve r &\geq K_r(z) - K_r(z \mid x)\\
&\geq K_r(z) - (K_r(z, x) - K_{r}(x)) - O(\log r)\\
&\geq K_r(z) - (K_r(z) +K_r(x \mid z) - K_{r}(x)) - O(\log r)\\
&= K_{r}(x) - K_r(x \mid z) - O(\log r)\,.
\end{align*}
Therefore, by Lemma \ref{lem:unichain}(ii),
\begin{align*}
K_s(x) - K_{s, r}(x \mid z) &= K_r(x) - K_{r, s}(x \mid x) - (K_r(x \mid z) - K_{r, s, r}(x \mid x, z))\\
&\leq \ve r + O(\log r) + K_{r, s, r}(x \mid x, z) - K_{r, s}(x \mid x) \\
&\leq \ve r + O(\log r)\,,
\end{align*}
and the proof is complete.
\end{proof}
When simultaneously handling complexities of finite bit strings and of a Euclidean point $x\in\R^n$, it is often convenient to deal with finite prefixes of the $x$'s binary expansion instead of $K$-minimizing rational points in neighborhoods of $x$. Let $x\uhr r$ denote the concatenation of the first $r$ bits in the binary expansions of $x$'s $n$ coordinates. It is straightforward to show that the $r$-dyadic point represented by $x\uhr r$ has complexity very close to that of the lowest-complexity rational point in $B_{2^{-r}}(x)$. In particular,
\begin{equation}\label{eq:trunc}
    K_r(x)=K(x\uhr r)+O(\log r).
\end{equation}
This approximate equivalence also holds for conditional and relative complexities, so we can move freely between the two representations while only incurring $O(\log r)$ error. See~\cite{LutStu20} for a thorough accounting of the relationship between these two representations.

\subsection{Effective Hausdorff and Packing Dimensions}
J. Lutz~\cite{Lutz03a} initiated the study of effective dimensions (also known as \emph{algorithmic dimensions}) by effectivizing Hausdorff dimension using betting strategies called~\emph{gales}, which generalize martingales. Subsequently, Athreya et al., defined effective packing dimension, also using gales~\cite{AHLM07}. Mayordomo showed that effective Hausdorff dimension can be characterized using Kolmogorov complexity~\cite{Mayo02}, and Athreya, Hitchcock, J. Lutz and Mayordomo~\cite{AHLM07} showed that effective packing dimension can also be characterized in this way. In this paper, we use these characterizations as definitions.
The \emph{effective Hausdorff dimension} and \emph{effective packing dimension} of a point $x\in\R^n$ are
\[\dim(x)=\liminf_{r\to\infty}\frac{K_r(x)}{r}\quad\text{and}\quad\Dim(x) = \limsup_{r\to\infty}\frac{K_r(x)}{r}\,.\]
Intuitively, these dimensions measure the density of algorithmic information in the point $x$. J. Lutz and N. Lutz~\cite{LutLut18} generalized these definitions by defining the \emph{lower} and \emph{upper conditional dimension} of $x\in\R^m$ given $y\in\R^n$ as
\[\dim(x\mid y)=\liminf_{r\to\infty}\frac{K_r(x\mid y)}{r}\quad\text{and}\quad\Dim(x\mid y) = \limsup_{r\to\infty}\frac{K_r(x\mid y)}{r}\,.\]

\subsection{The Point-to-Set Principle}

The following \emph{point-to-set principles} show that the classical notions of Hausdorff and packing dimension of a set can be characterized by the effective dimension of its individual points. The first point-to-set principle, for a restricted class of sets, was implicitly proven by J. Lutz~\cite{Lutz03a} and Hitchcock~\cite{Hitc05}. 

A set $E \subseteq \R^n$ is a $\mathbf{\Sigma}^0_2$ set if it is a countable union of closed sets. The computable analogue of $\mathbf{\Sigma}^0_2$ is the class $\Sigma_2^0$ of sets $E\subseteq \R^n$ such that there is a uniformly computable sequence $\{C_{i}\}_{i \in \N}$ satisfying
\[E = \bigcup\limits_{i = 0}^\infty C_{i}\,,\]
and each set $C_i$ is \emph{computably closed}, meaning that its complement is the union of a computably enumerable set of open balls with rational radii and centers. We will use the fact that every $\mathbf{\Sigma}^0_2$ set is $\Sigma_2^0$ relative to some oracle.
\begin{thm}[\cite{Lutz03a,Hitc05}]\label{thm:strongPointToSetDim}
Let $E \subseteq \R^n$ and $A \subseteq \N$ be such that $E$ is a $\Sigma_2^0$ set relative to $A$. Then
\[\dimH(E) = \sup\limits_{x \in E} \dim^A(x)\,.\]
\end{thm}

J. Lutz and N. Lutz~\cite{LutLut18} improved this result to show that the Hausdorff and packing dimension of \emph{any} set $E\subseteq\R^n$ is characterized by the corresponding effective dimensions of individual points, relativized to an oracle that is optimal for the set $E$.
\begin{thm}[Point-to-set principle~\cite{LutLut18}]\label{thm:p2s}
Let $n \in \N$ and $E \subseteq \R^n$. Then
\begin{align*}
\dimH(E) &= \adjustlimits\min_{A \subseteq \N} \sup_{x \in E} \dim^A(x), \text{ and}\\
\dimP(E) &= \adjustlimits\min_{A \subseteq \N} \sup_{x \in E} \Dim^A(x)\,.
\end{align*}
\end{thm}

\section{Bounding the Complexity of Projections}\label{sec:BoundComplexityProj}
In this section, we will focus on bounding the Kolmogorov complexity of a projected point \emph{at a given precision}. In Section \ref{sec:ProjThm}, we will use these results in conjunction with Theorem \ref{thm:p2s} to prove our main theorems.

We begin by giving intuition of the main idea behind this lower bound. We will show that under certain conditions, given an approximation of $e\cdot z$ and $e$, we can compute an approximation of the original point $z$.  Informally, these conditions are the following.
\begin{enumerate}
\item The complexity $K_r(z)$ of the original point is small.
\item If $e\cdot w = e\cdot z$, then either $K_r(w)$ is large, or $w$ is close to $z$.
\end{enumerate}
Assuming that both conditions are satisfied, we can recover $z$ from $e\cdot z$ by enumerating over all points $u$ of low complexity such that $e\cdot u = e\cdot z$. By our assumption, any such point $u$ must be a good approximation of $z$. We now formally state this lemma.
\begin{lem}\label{lem:point}
Suppose that $z\in\R^n$, $e \in S^{n-1}$, $r\in\N$, $\delta\in\R_+$, and $\ve,\eta\in\Q_+$ satisfy $r\geq \log(2\|z\|+5)+1$ and the following conditions.
\begin{itemize}
\item[\textup{(i)}]$K_r(z)\leq \left(\eta+\ve\right)r$.
\item[\textup{(ii)}] For every $w \in B_1(z)$ such that $e\cdot w=e\cdot z$, \[K_{r}(w)\geq\left(\eta-\ve\right)r+(r- t)\delta\,,\]
whenever $t=-\log\|z-w\|\in(0,r]$.
\end{itemize}
Then for every oracle set $A\subseteq\N$,
\[K_r^{A, e}(e\cdot z)\geq K_r^{A,e}(z)-\frac{n\ve}{\delta}r-K(\ve)-K(\eta)-O(\log r)\,,\]
where the constant implied by the big oh bound depend only on $z, e,$ and $n$ .
\end{lem}

Our proof will use the following geometric observation, which is verified by routine calculations.

\begin{obs}\label{obs:existProj}
Let $z \in \R^n$, $p \in \Q^n$, $e \in S^{n-1}$, and $r \in \N$ such that $\vert e \cdot z  - e\cdot p \vert \leq 2^{-r}$. Then there is a $w \in \R^n$ such that $\| p - w \| \leq 2^{- r}$ and $e \cdot z  = e \cdot w $.
\end{obs}
\begin{proof}
Suppose $z \in \R^n$, $p \in \Q^n$, $e \in S^{n-1}$, and $r \in \N$ satisfy the hypothesis. Define
\[w = p + [e \cdot (z - p)]e\,.\]
Then we deduce that
\begin{align*}
    \| p - w\| &= \| p - (p + [e \cdot (z - p)]e)\|\\
    &= \|[e \cdot (z - p)]e\|\\
    &= \vert e \cdot (z-p)\vert \\
    &=\vert e \cdot z  - e\cdot p \vert\\
    &\leq 2^{-r}.
\end{align*}
We can also deduce that
\begin{align*}
    e \cdot w &= e \cdot p + e \cdot [e\cdot (z-p)]e\\
    &= e \cdot p + [e\cdot (z-p)](e \cdot e)\\
    &= e \cdot p + e\cdot (z-p)\\
    &= e \cdot z.
\end{align*}
\end{proof}

\begin{proof}[Proof of Lemma~\ref{lem:point}]
Suppose $z$, $e$, $r$, $\delta$, $\ve$, $\eta$, and $A$ satisfy the hypothesis.
				
Define an oracle Turing machine $M$ that does the following given oracle $(A,e)$ and input $\pi=\pi_1\pi_2\pi_3\pi_4\pi_5$ such that $U^A(\pi_1)=q\in\Q$, $U(\pi_2)=h\in\Q^n$, $U(\pi_3)=s\in\N$, $U(\pi_4)=\zeta\in\Q$, and $U(\pi_5)=\iota\in\Q$.
				
For every program $\sigma\in\{0,1\}^*$ with $\ell(\sigma)\leq (\iota+\zeta)s$, in parallel, $M$ simulates $U(\sigma)$. If one of the simulations halts with some output $p = (p_1,\ldots, p_n)\in \Q^n\cap B_{2^{-1}}(h)$ such that $|e\cdot p  - q|< 2^{-s}$, then $M^{A, e}$ halts with output $p$. Let $c_M$ be a constant for the description of $M$.
				
Let $\pi_1$, $\pi_2$, $\pi_3$, $\pi_4$, and $\pi_5$ testify to $K^{A,e}_r(e \cdot z )$, $K_1(z)$, $K(r)$, $K(\ve)$, and $K(\eta)$, respectively, and let $\pi=\pi_1\pi_2\pi_3\pi_4\pi_5$. Let $\sigma$ be a program of length at most $(\eta + \ve)r$ such that $\|p - z\| \leq 2^{-r}$, where $U(\sigma) = p$. Note that such a program must exist by condition (i) of our hypothesis. Since the function $x\mapsto e\cdot x$ is 1-Lipschitz,
\[|e \cdot z  - e\cdot p | \leq 2^{-r}\,,\]
for some fixed constant $c$ depending only on $z$ and $e$. Thus there is at least one program, $\sigma$, on which $M^{A, e}$ is guaranteed to halt on $\pi$.

Let $M^{A, e}(\pi) = p = (p_1,\ldots, p_n) \in \Q^n$. Observation~\ref{obs:existProj} shows that there is some
\[w \in B_{2^{\gamma-r}}(p)\subseteq B_{2^{-1}}(p)\subseteq B_{2^0}(z)\]
such that $e \cdot w  = e \cdot z$, where $\gamma$ is a constant depending only on $z$ and $e$.  Then,
\begin{align*}
K^{A,e}_r(w) &\leq |\pi| \\
&\leq K^{A, e}_r(e \cdot z ) + K_1(z) + K(r) + K(\ve) + K(\eta) + c_M\\
&= K^{A, e}_r(e \cdot z ) + K(\ve) + K(\eta) + O(\log r)\,,
\end{align*}
Rearranging this yields
\begin{equation}\label{eq:pointLowerPez}
K^{A, e}_r(e \cdot z ) \geq K^{A, e}_r(w) -  K(\ve) - K(\eta) - O(\log r)\,.
\end{equation}
Let $t = -\log\|z - w\|$. If $t \geq r$, then the proof is complete. If $t < r$, then $B_{2^{-r}}(p) \subseteq B_{2^{1-t}}(z)$, which implies that $K^{A, e}_r(w) \geq K^{A, e}_{t-1}(z)$. Therefore,
\begin{equation}\label{eq:pointLowerw}
K^{A, e}_r(w) \geq K^{A, e}_{r}(z) - n(r-t) - O(\log r)\,.
\end{equation}
We now bound $r - t$. By our construction of $M$,
\begin{align*}
(\eta + \ve) r &\geq K(p)\\
&\geq K_r(w) - O(\log r)\,.
\end{align*}
By condition (ii) of our hypothesis, then,
\[(\eta + \ve) r \geq (\eta - \ve) r + \delta(r - t)\,,\]
which implies that 
\[r - t \leq \frac{n\ve}{\delta} r + O(\log r)\,.\]
Combining this with inequalities~\eqref{eq:pointLowerPez} and~\eqref{eq:pointLowerw} concludes the proof.
\end{proof}

With the above lemma in mind, we wish to give a lower bound on the complexity of points $w$ such that $e\cdot w = e\cdot z$. Our next lemma gives a bound based on the complexity, relative to $z$, of the direction $e \in S^{n-1}$. This is based on the observation that we can calculate $e = (e_1,\ldots, e_n)$ given $w$, $z$, and $e_3,\ldots, e_n$, by solving the following pair of equations.
\begin{align}\label{eq:system}
\begin{split}
e\cdot(z - w)  &= 0 \\
e_1^2 + \ldots +e_n^2 &= 1\,.
\end{split}
\end{align}
This suggests that the complexity $K_r^{z, e_3,\ldots,e_n}(e)$ cannot be too much larger than $K^{z, e_3,\ldots,e_n}_r(w)$. However, for our purposes, we must be able to recover (an approximation of) $e$ given \emph{approximations} of $w$ and $z$. Intuitively, Lemma \ref{lem:lowerBoundOtherPoints} below shows that we can algorithmically compute an approximation of $e$ whose error is linearly correlated with distance between $w$ and $z$. We can then bound the complexity of $w$ using a symmetry of information argument.

\begin{lem}\label{lem:lowerBoundOtherPoints}
Let $z \in \R^n$, $e \in S^{n-1}$, and $r \in \N$. Let $w \in \R^n$ such that $\|z-w\|\geq 2^{-r}$, $e$ is nonzero in all coordinates, and $e\cdot z = e \cdot w $. Then there are numbers $1\leq i< j\leq n$ such that 
\begin{equation}\label{eq:lBOP}
    K_r(w) \geq K_t(z) + K^{e'}_{r-t, r}(e \mid z) + O(\log r)\,,
\end{equation}
where $t = -\log \|z - w\|$ and $e'\in\R^{n-2}$ consists of $e$ with its $i$\textsuperscript{th} and $j$\textsuperscript{th} coordinates removed.
\end{lem}
\begin{rem}
    When $n = 2$, we do not need the added complexity of removing coordinates. In this case, under the same hypothesis, we can conclude that
\begin{equation*}
    K_r(w) \geq K_t(z) + K_{r-t,r}(e\mid z) + O(\log r)\,.
\end{equation*}
This was proved in~\cite{LutStu20} in the context of lines.
\end{rem}
\begin{proof}
Let $z, w, e$, and $r$ be as in the statement of the lemma.

If $r-t=O(\log r)$, then $K_{r-t,r}(e\mid z)=O(\log r)$, so inequality~\eqref{eq:lBOP} holds trivially. Otherwise, Lemma~\ref{lem:quadratic} in the Appendix applies. It shows that there is an oracle Turing machine $M$ that uses the quadratic formula to find an approximate solution to the system of equations~\eqref{eq:system} given approximations of $z$ and $w$. More formally, if $q \in B_{2^{-r}}(z) \cap \Q^n$ and $\pi_q$ is a witness to
\[\hat{K}_r(w \mid q) = \min\{K(p \mid q) \, : \, p \in B_{2^{-r}}(w) \cap \Q^n\}\,,\]
then
\[\left\vert M^{e'}(\pi_q, q, h_1,h_2) - e\right\vert \leq 2^{\alpha+t -r}\,,\]
where $h_1,h_2\in\{0,1\}$, $\alpha$ is a constant depending only on $e$, and $e'$ is $e$ with some two of its coordinates removed. Thus, $K^{e'}_{r-t, r}(e \mid z) \leq \vert \pi_q\vert + O(1)$. Equivalently,
\begin{equation}\label{eq:eGivenW}
K^{e'}_{r-t, r}(e \mid z) \leq K_{r}(w \mid z) + O(1).
\end{equation}

To complete the proof, we apply the symmetry of information to note that 
\begin{align*}
K_{r}(w \mid z) &\leq K_{r, t}(w \mid z) + O(\log r) \\
&= K_{r,t}(w \mid w) + O(\log r)\\
&= K_r(w) - K_t(w) + O(\log r)\\
&= K_r(w) - K_t(z)+ O(\log r)\,.
\end{align*}
The lemma follows from rearranging the above inequality and combining it with inequality~\eqref{eq:eGivenW}.
\end{proof}

Finally, to satisfy the condition that $K_r(z)$ is small, we will use an oracle to ``artificially" decrease the complexity of $z$ at precision $r$. This requires some care, as we don't want the to affect the complexity of other points, or of $z$ at low precisions, any more than is necessary. The oracle we use will be based on following technical lemma from our earlier work.
\begin{lem}[\cite{LutStu20}]\label{lem:oracles}
Let $z\in\R^n$, $\eta\in\Q\cap[0,\dim(z)]$, and $r\in\N$. Then there is an oracle $D=D(r,z,\eta)$ with the following properties.
\begin{itemize}
\item[\textup{(i)}] For every $t\leq r$,
\[K^D_t(z)=\min\{\eta r,K_t(z)\}+O(\log r)\,.\]
\item[\textup{(ii)}] For every $m,t\in\N$ and $y\in\R^m$,
\[K^{D}_{t,r}(y\mid z)=K_{t,r}(y\mid z)+ O(\log r)\,,\]
and
\[K_t^{z,D}(y)=K_t^z(y)+ O(\log r)\,.\]
\end{itemize}
In particular, this oracle $D$ encodes $\sigma$, the lexicographically first time-minimizing witness to $K(z\uhr r\mid z\uhr s)$, where $s = \max\{t \leq r \, : \, K_{t-1}(z) \leq \eta r\}$, in the sense that, for all $\tau\in\{0,1\}^*$ and all oracles $B \subseteq \N$
\begin{equation}\label{eq:lftmw}
    K^{B,D}(\tau)=K^B(\tau\mid \sigma)+O(1)\,.
\end{equation}
\end{lem}

For our main theorems, we will need one additional property of the oracle $D$ in Lemma~\ref{lem:oracles}: For every oracle $B$, the amount that access to $B$ reduces the complexity of $z$ is not significantly increased by relativizing to $D$. The following lemma, which makes this precise, will allow us to take advantage of $D$'s properties, even in the presence of other oracles.
\begin{lem}\label{lem:independenceAeDr}
Let $z \in \R^n$, $B \subseteq \N$, $\eta \in \Q \cap [0, \dim(z)]$, $\ve > 0$, and $r \in \N$. Let $D=D(r,z,\eta)$ be the oracle defined in Lemma~\ref{lem:oracles}. If
\begin{equation}\label{eq:projthmassump}
    K^{B}_r(z) \geq K_r(z) - \ve r\,,
\end{equation}
then
\begin{equation}\label{eq:independenceAeDrGoal}
    K^{B, D}_r( z ) \geq K^{D}_r(z) - \ve r - O(\log r)\,.
\end{equation}
\end{lem}
\begin{proof}
Let $\sigma$ be as in Lemma~\ref{lem:oracles}. Then
\begin{equation*}
    K(\sigma \mid z\uhr r, s, K(z\uhr r\mid z\uhr s)) = O(1)\,.
\end{equation*}
Briefly, this is because, given the first $r$ bits of $z$, the value $K(z\uhr r\mid z\uhr s)$, and the parameter $s$, it is possible to execute an exhaustive search for the first time-minimizing witness to $K(z\uhr r\mid z\uhr s)$. Since $K(z\uhr r\mid z\uhr s)\leq r+O(\log r)$ and $s\leq r$, we have $K(K(z\uhr r\mid z\uhr s),s)=O(\log r)$, so by the symmetry of information,
\begin{equation}\label{eq:computeSigmaWithZ}
    K(\sigma \mid z\uhr r) = O(\log r)\,.
\end{equation}
This argument relativizes, so equation~\eqref{eq:computeSigmaWithZ} holds relative to all oracles. All that remains is a routine calculation using the symmetry of information and equation~\eqref{eq:trunc}:
\begin{align*}
    K^{B, D}_r( z ) &= K^{B, D}( z\uhr r )+O(\log r)\tag*{[equation~\eqref{eq:trunc}, relative to $(B,D)$]}\\
    &=K^{B}_r(z\uhr r\mid \sigma)+O(\log r)\tag*{[equation~\eqref{eq:lftmw}]}\\
    &= K^{B}(z\uhr r) +K^{B}(\sigma\mid z\uhr r) - K^{B}(\sigma) + O(\log r) \tag*{[symmetry of information]}\\
    &= K^{B}(z\uhr r) - K^{B}(\sigma) + O(\log r) \tag*{[equation~\eqref{eq:computeSigmaWithZ}, relative to $B$]}\\
    &=  K^{B}_r(z) - K^B(\sigma) + O(\log r) \tag*{[equation~\eqref{eq:trunc}, relative to $B$]}\\
    &\geq  K_r(z) - K^B(\sigma) - \ve r- O(\log r)\tag*{[inequality~\eqref{eq:projthmassump}]}\\
    &\geq K_r(z)  - K(\sigma) - \ve r - O(\log r)\tag*{[oracles cannot significantly increase complexity]}\\
    &= K(z\uhr r)  - K(\sigma) - \ve r - O(\log r)\tag*{[equation~\eqref{eq:trunc}]}\\
    &= K(z\uhr r)  + K(\sigma \mid z\uhr r) - K(\sigma) - \ve r - O(\log r)\tag*{[equation~\eqref{eq:computeSigmaWithZ}]}\\
    &= K(z\uhr r\mid \sigma)- \ve r - O(\log r)\tag*{[symmetry of information]}\\
    &= K^{D}(z\uhr r) - \ve r - O(\log r)\tag*{[equation~\eqref{eq:lftmw}]}\\
    &= K_r^{D}(z) - \ve r - O(\log r)\tag*{[equation~\eqref{eq:trunc}, relative to $D$]}\,,
\end{align*}
proving that~\eqref{eq:independenceAeDrGoal} holds.
\end{proof}

\section{Projection Theorems}\label{sec:ProjThm}
The main results of the previous section gave us sufficient conditions for strong lower bounds on the complexity of $e \cdot z $ at a given precision and methods to ensure that the conditions are satisfied. Theorem~\ref{thm:mainengine}  encapsulates these results so that we may apply them in the proofs of our main theorems on projections.

\begin{thm}\label{thm:mainengine}
Let $z \in \R^n$, $e \in S^{n-1}$ with all coordinates nonzero, $A \subseteq \N$, $\eta \in \Q \cap (0, \min\{\dim(z),1\})$, $\ve > 0$, and $r \in \N$. Assume the following conditions are satisfied.
\begin{enumerate}
\item For every $s \leq r$ and $1\leq i< j\leq n$, $K_{s}(e\mid e') \geq s - O(\log s)$.
\item $K^{A, e}_r(z) \geq K_r(z) - \ve r$,
\end{enumerate}
where $e'\in\R^{n-2}$ consists of $e$ with its $i$\textsuperscript{th} and $j$\textsuperscript{th} coordinates removed. Then,
\[K^{A, e}_r(e \cdot z ) \geq \eta r - \ve r -\frac{2n\ve}{1-\eta}r-K(\ve)-K(\eta)-O(\log r)\,.\]
\end{thm}

\begin{proof}
Assume the hypothesis. Let $D=D(r,z,\eta)$ be the oracle as defined in Lemma~\ref{lem:oracles}.

First assume that the conditions of Lemma \ref{lem:point}, relative to $D$, hold for $z, e$, $r$, $\eta$, $2\ve$, and $\delta = 1 - \eta$. Then, by Lemmas \ref{lem:point}, \ref{lem:oracles} and \ref{lem:independenceAeDr},
\begin{align*}
K_r^{A,  e,D}(e \cdot z ) &\geq K_r^{A, e,D}(z)-\frac{2n\ve}{1-\eta}r-K^{D}(2\ve)-K^{D}(\eta)-O(\log r)\tag*{[Lemma~\ref{lem:point}, relative to $D$]}\\
&\geq K^{D}_r(z) - \ve r-\frac{2n\ve}{\delta}r-K(\ve)-K(\eta)-O(\log r)\tag*{[Lemma~\ref{lem:independenceAeDr}]}\\
&= \eta r - \ve r-\frac{2n\ve}{\delta}r-K(\ve)-K(\eta)-O(\log r)\tag*{[Lemma~\ref{lem:oracles}(i)]}\,.
\end{align*}
Note that all constants hidden by the $O(\log r)$ term depend only on $z$. Therefore, to complete the proof, it suffices to show that the conditions of Lemma \ref{lem:point} hold relative to $D$. 

By property (i) of Lemma~\ref{lem:oracles},
\[K^D_r(z)\leq \varepsilon r+O(\log r)\,,\]
so for all sufficiently large $r$,
\[K^D_r(z)\leq (\eta +\ve)r\,,\]
which is condition (i) of Lemma \ref{lem:point}, relative to $D$. To see that condition (ii) of Lemma \ref{lem:point} also holds relative to $D$, let $w \in B_1(z)$ such that $e \cdot w =e \cdot z $ and $\|z-w\|\geq 2^{-r}$, and let $t=-\log\|z-w\|$. Then by Lemma \ref{lem:lowerBoundOtherPoints}, relative to $D$, there is some pair $1\leq i<j\leq n$ such that
\begin{align*}
K^{D}_r(w) &\geq K^{D}_t(z) + K^{D, e'}_{r-t, r}(e \mid z) + O(\log r)\\
&=K^{D}_t(z)+K^{e'}_{r-t,r}(e \mid z) - O(\log r)\tag*{[Lemma \ref{lem:oracles}(ii)]}\\
&\geq K^{D}_t(z)+K^{e'}_{r-t, r}(e) - \ve r - O(\log r)\tag*{[condition 2]}\\
&\geq \eta t + K^{e'}_{r-t, r}(e) - \ve r - O(\log r)\tag*{[Lemma \ref{lem:oracles}(i)]}\\
&\geq \eta t + r - t - \ve r - O(\log r)\tag*{[condition 1]}\\
&= t (\eta - 1) + r(1 - \ve) - O(\log r)\tag*{[rearranging]}\\
&\geq (\eta - \ve) r + \delta(r - t)\,,
\end{align*}
Hence, the conditions of Lemma \ref{lem:point} hold relative to $D$, and the proof is complete.
\end{proof}

\subsection{Projection Theorems For Non-Analytic Sets}\label{ssec:OtherPT}
Our first main theorem shows that, if the Hausdorff and packing dimensions of $E$ are equal, the conclusion of Marstrand's theorem holds. Essentially this assumption guarantees, for every oracle-direction pair $(A, e)$, the existence of a point $z \in E$ such that 
\begin{enumerate}
    \item $\dim^{A,e}(z) \geq \dimH(E) - \ve$, and
    \item the oracle $(A,e)$ does not change the complexity $K_r(z)$ at almost every precision $r$.
\end{enumerate}
This allows us to use Theorem \ref{thm:mainengine} at all sufficiently large precisions $r$.
{\renewcommand{\thethm}{\ref{thm:main2}}
\begin{thm}
Let $E \subseteq \R^n$ be any set with $\dimH(E) = \dimP(E) = s$. Then for almost every $e \in S^{n-1}$, 
\[\dimH(\proj_e E) \geq \min\{s, 1\}\,.\]
\end{thm}
\addtocounter{thm}{-1}}
\begin{proof}
Let $E \subseteq \R^n$ be any set with $\dimH(E) = \dimP(E) = s$. By Theorem \ref{thm:p2s}, there is an oracle $B\subseteq \N$ such that
\begin{equation}\label{eq:Bdef}
    \sup_{z\in E}\dim^B(z)=\sup_{z\in E}\Dim^B(z)=s\,.
\end{equation}
Let  $e \in S^{n-1}$ be any point which is random relative to $B$, meaning that
\begin{equation}\label{eq:edef}
    K^B_r(e)\geq (n-1)r-O(1)\,.
\end{equation}
Note that almost every point in $S^{n-1}$ satisfies this requirement and that such an $e$ will be nonzero in all coordinates. By Theorem \ref{thm:p2s} and the definition of $\proj_e E$, there is an oracle $A\subseteq \N$ such that
\begin{equation}\label{eq:Adef}
    \dimH(\proj_e E)=\sup_{z\in E}\dim^A(e\cdot z)\,.
\end{equation}
Hence, it suffices to show that for every $\ve > 0$ there is a $z \in E$ such that
\[\dim^{A}(e \cdot z ) \geq \min\{s, 1\} - \ve\,.\]

To that end, let $\eta \in \Q \cap (0, \min\{s,1\})$ and $\ve > 0$. By Theorem \ref{thm:p2s}, \[s\leq\sup_{z\in E}\dim^{A,B,\ve}(z)\,,\]
so there is a point $z_{\ve} \in E$ such that
\begin{equation}\label{eq:zedef}
    \dim^{B, A,e}(z_{\ve})\geq s - \frac{\ve}{2}\,.
\end{equation}
We now show that the conditions of Theorem \ref{thm:mainengine}, relative to $B$, are satisfied by $z_\ve$, $e$, $A$, $\eta$, $\varepsilon$, and all sufficiently large $r \in \N$. To see that condition 1 of Theorem \ref{thm:mainengine} holds, let $s\leq r$, let $1\leq i<j\leq n$, and let $e'$ consist of $e$ with its $i$\textsuperscript{th} and j\textsuperscript{th} coordinates removed. Then
\begin{align*}
    K^B_s(e\mid e')&=K_s^B(e)-K^B_s(e')+O(\log s)\tag*{[Lemma~\ref{lem:unichain}]}\\
    &\geq (n-1)s - K_s^B(e') - O(\log s)\tag*{[inequality~\eqref{eq:edef}]}\\
    &\geq (n-1)s-(n-2)(s+O(\log s))-O(\log s)\tag*{[$e'\in\R^{n-2}$]}\\
    &=s-O(\log s)\,.
\end{align*} 
To see that condition 2 holds, observe that for all sufficiently large $r$,
\begin{align*}
    K^B_r(z_\ve)-K^{B,A,e}_r(z_\ve)&\leq\frac{\ve r}{2}+r\cdot\limsup_{r\to\infty}\frac{K^B_r(z_\ve)-K^{B,A,e}_r(z_\ve)}{r}\tag*{[definition of $\limsup$]}\\
    &\leq\frac{\ve r}{2}+r\cdot\left(\Dim^B(z_\ve)-\dim^{B,A,e}(z_\ve)\right)\tag*{[definitions of $\dim$ and $\Dim$]}\\
    &\leq\frac{\ve r}{2}+r\cdot\left(s-\dim^{B,A,e}(z_\ve)\right)\tag*{[equation~\eqref{eq:Bdef}]}\\
    &\leq\frac{\ve r}{2}+r\cdot(s-(s-\ve/2))\tag*{[equation~\eqref{eq:zedef}]}\\
    &=\ve r\,,
\end{align*}
so $K_r^{B,A,e}(z_\ve)\geq K^B_r(z_\ve)-\ve r$.
%
Hence, the conditions of Theorem \ref{thm:mainengine}, relative to $B$, are satisfied. 


Therefore,
\begin{align*}
\dim^A(e \cdot z_{\ve}) &\geq \dim^{A, B, e}(e \cdot z_{\ve}) \tag*{[oracles cannot increase dimension]}\\
&= \liminf\limits_{r \rightarrow \infty} \frac{K^{A, B, e}_r(e \cdot z_{\ve})}{r}\tag*{[definition of $\dim$]}\\
&\geq \liminf\limits_{r \rightarrow \infty} \frac{\eta r - \ve r -\frac{2n\ve}{1-\eta}r-K(\ve)-K(\eta)-O(\log r)}{r}\tag*{[Theorem~\ref{thm:mainengine}, relative to $B$]}\\
&= \eta - \ve - \frac{2n\ve}{1-\eta}\,.
\end{align*}
Letting $\eta$ approach $\min\{s,1\}$ and $\varepsilon$ approach 0, we see that
\[\sup\limits_{z\in E}\dim^A(e \cdot z) \geq \min\{s,1\}\,,\]
so by equation~\eqref{eq:Adef}, the proof is complete.
\end{proof}

Our second main theorem gives a lower bound for the \emph{packing} dimension of a projection for \emph{general} sets. The proof of this theorem again relies on the ability to choose, for every $(A, e)$, a point $z$ whose complexity is unaffected by access to the oracle $(A, e)$. This cannot be assumed to hold for every precision $r$. However, by Theorem \ref{thm:p2s}, we can show that this can be done for infinitely many precision parameters $r$.
{\renewcommand{\thethm}{\ref{thm:main3}}
\begin{thm}
Let $E \subseteq \R^n$ be any set with $\dimH(E) = s$. Then for almost every  $e \in S^{n-1}$, 
\[\dimP(\proj_e E) \geq \min\{s, 1\}\,.\]
\end{thm}
\addtocounter{thm}{-1}}
\begin{proof}
	Let $E \subseteq \R^n$ be any set with $\dimH(E) = s$. By Theorem \ref{thm:p2s}, there is an oracle $B \subseteq \N$ such that 
	\begin{equation}\label{eq:Bhausoracle}
	    \dimH(E)=\sup_{x\in E}\dim^B(x)
	\end{equation}
	and
	\begin{equation}\label{eq:Bpackoracle}
	    \dimP(E)=\sup_{z\in E}\Dim^B(z)\,.
	\end{equation}
	Let  $e \in S^{n-1}$ be random relative to $B$; note that almost every $e\in S^{n-1}$ satisfies this requirement and that such an $e$ will be nonzero in all coordinates. Applying Theorem~\ref{thm:p2s} again, let $A \subseteq \N$ be an oracle such that
	\begin{equation}\label{eq:Apackoracle}
	    \dimP(\proj_e E)=\sup_{x\in \proj_e(E)}\Dim^A(x)=\sup_{z\in E}\Dim^A(e\cdot z)\,.
    \end{equation}
    Then it suffices to show that for every $\ve > 0$ there is a $z \in E$ such that
	\[\Dim^{A}(e \cdot z ) \geq \min\{s, 1\} - \ve\,.\]
	
	To that end, let $\eta \in \Q \cap (0,1) \cap (0, s)$ and $\ve > 0$. By Theorem \ref{thm:p2s}, there is a $z_{\ve} \in E$ such that
	\begin{equation}\label{eq:thirdThmInd1}
	s - \frac{\ve}{4} \leq \dim^{A, B, e}(z_{\ve})  \leq \dim^B(z_{\ve}) \leq s\,.
	\end{equation}

	We now show that the conditions of Theorem \ref{thm:mainengine} are satisfied relative to $B$ for infinitely many $r \in \N$. To see that condition 1 of Theorem \ref{thm:mainengine} holds, let $s\leq r$, let $1\leq i<j\leq n$, and let $e'$ consist of $e$ with its $i$\textsuperscript{th} and j\textsuperscript{th} coordinates removed. Then
\begin{align*}
    K^B_s(e\mid e')&=K_s^B(e)-K^B_s(e')+O(\log s)\tag*{[Lemma~\ref{lem:unichain}]}\\
    &\geq (n-1)s - K_s^B(e') - O(\log s)\tag*{[$e$ is random relative to $B$]}\\
    &\geq (n-1)s-(n-2)(s+O(\log s))-O(\log s)\tag*{[$e'\in\R^{n-2}$]}\\
    &=s-O(\log s)\,.
\end{align*} 
To see condition 2 holds, we first note that by inequality~\eqref{eq:thirdThmInd1} and the definition of $\dim$, there are cofinitely many $r\in\N$ such that
	\begin{equation}\label{eq:1.3.1}
	sr  - \frac{\ve}{2}r \leq r\cdot\left(\dim^{A,B,e}(z_\ve)-\frac{\ve}{4}\right)\\
	\leq K^{A, B, e}_r(z_{\ve})
	\end{equation}
	and infinitely many $r\in\N$ such that
	\begin{equation}\label{eq:1.3.2}
    K^B_r(z_{\ve})\leq r\cdot\left(\dim^B(z_\ve)+\frac{\ve}{4}\right)\leq sr + \frac{\ve}{4} r\,.
	\end{equation}
    Furthermore, since oracles cannot significantly increase complexity,
    \begin{equation}\label{eq:1.3.3}
        K_r^{A,B,e}(z_\ve)\leq K^B_r(z_\ve)+O(1)
        \leq K^B_r(z_\ve)+\frac{\ve r}{4}\,,
    \end{equation}
    for all but finitely many $r\in\N$. Thus there are infinitely many $r\in\N$ such that inequalities~\eqref{eq:1.3.1},~\eqref{eq:1.3.2}, and~\eqref{eq:1.3.3} all hold. For all such $r$,
    \[K_r^{A,B,e}(z_\ve),K_r^B(z_\ve)\in\left[sr-\frac{\ve}{2}r,sr+\frac{\ve}{2}r\right]\,.\]
    so we have
    \[K_r^{A,B,e}(z_\ve)\geq K_r^B(z_\ve)-\ve r\,,\]
    which is condition 2 of Theorem~\ref{thm:mainengine}, relative to $B$.
    
	Hence, we can apply Theorem \ref{thm:mainengine}, relative to $B$, for infinitely many $r \in \N$, so
	\begin{align*}
 	\Dim^A(e \cdot z_{\ve}) &\geq \Dim^{A, B, e}(e \cdot z_{\ve})\tag*{[oracles cannot increase dimension]}\\
 	&= \limsup\limits_{r \rightarrow \infty} \frac{K^{A, B, e}_r(e \cdot z_{\ve})}{r}\tag*{[definition of $\Dim$]}\\
 	&\geq \limsup\limits_{r \rightarrow \infty} \frac{\eta r - \ve r -\frac{2n\ve}{1-\eta}r-K(\ve)-K(\eta)-O(\log r)}{r}\tag*{[Theorem~\ref{thm:mainengine}, relative to $B$]}\\
 	&= \eta - \ve - \frac{2n\ve}{1-\eta}\,.
 	\end{align*}
	Letting $\eta$ approach $\min\{s,1\}$ $\ve$ approach 0, we see that 
    \[\sup\limits_{z \in E} \Dim^{A}(e \cdot z) \geq \min\{s, 1\}\,,\]
	so by equation~\eqref{eq:Apackoracle}, the proof is complete
\end{proof}

\subsection{Marstrand's Projection Theorem}\label{ssec:MPT}
We now give a new, algorithmic information theoretic proof of Marstrand's projection theorem. Recall its statement:
{\renewcommand{\thethm}{\ref{thm:main1}}
\begin{thm}[\cite{Marstrand54,Mattila75}]
Let $E \subseteq \R^n$ be analytic with $\dimH(E) = s$. Then for almost every  $e \in S^{n-1}$, \[\dimH(\proj_e E) = \min\{s, 1\}\,.\]
\end{thm}
\addtocounter{thm}{-1}}
Note the order of the quantifiers. To use Theorem \ref{thm:p2s}, we must first choose a direction $e \in S^{n-1}$. We then must show that for every oracle $A\subseteq\N$ and every $\ve > 0$, there is some point $z \in E$ satisfying
\[\dim^A(e \cdot z ) \geq \dimH(E) - \ve\,.\]
In order to apply Theorem \ref{thm:mainengine}, we must guarantee that access to the oracle $(A, e)$ does not significantly change the complexity of $z$. To ensure this, we will use the point-to-set principle of J. Lutz and Hitchcock (Theorem \ref{thm:strongPointToSetDim}). While this result is less general than the principle of J. Lutz and N. Lutz, the oracle characterizing the dimension of a $\Sigma^0_2$ set is easier to work with. 

To take advantage of this, we use the following lemma.
\begin{lem}\label{lem:Fsigma}
Let $E \subseteq \R^n$ be analytic with $\dimH(E)=s$. Then there is a $\mathbf{\Sigma}^0_2$ set $F \subseteq E$ such that $\dimH(F) = s$.
\end{lem}
\begin{proof}
Davies~\cite{Davies52}, generalizing work by Besicovitch~\cite{Besicovitch52}, showed that if $E \subseteq \R^n$ is analytic, then for every $\ve \in(0,s]$, there is a compact subset $E_\ve \subseteq E$ such that $\dimH(E_\ve) = s-\ve$. Thus, the set \[F=\bigcup_{i=\lceil 1/s\rceil}^\infty E_{1/i}\]
is a $\mathbf{\Sigma}^0_2$ set with $\dimH(F)=s$.
\end{proof}

We will also use the following observation, which is a consequence of the well-known fact from descriptive set theory that $\Sigma$ classes are closed under computable projections.

\begin{obs}\label{obs:compClosedOracle}
Let $E \subseteq \R^n$ and $A \subseteq \N$ be such that $E$ is a $\Sigma_2^0$ set relative to $A$. Then for every $e \in S^{n-1}$, $\proj_e E$ is a $\Sigma_2^0$ set relative to $(A, e)$.
\end{obs}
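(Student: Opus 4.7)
The plan is to reduce the observation to a compact, computable-analysis fact: the image of an effectively compact set under a computable function is computably closed. First I would invoke the hypothesis to write $E = \bigcup_{i \in \N} C_i$, where $\{C_i\}$ is a uniformly $A$-computable sequence of computably closed sets. Intersecting with the closed rational ball $\overline{B_j(0)}$, which is uniformly computably closed, yields $D_{i,j} := C_i \cap \overline{B_j(0)}$, a uniformly $A$-computably closed \emph{compact} set, and $E = \bigcup_{i, j \in \N} D_{i,j}$. Since projection commutes with unions,
\[
\proj_e E = \bigcup_{i, j \in \N} \proj_e D_{i,j},
\]
so it would suffice to prove that each compact projection $\proj_e D_{i,j}$ is computably closed relative to $(A, e)$, uniformly in $(i, j)$.

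To do this, I would enumerate rational open intervals $(p - \delta, p + \delta)$ whose disjointness from $\proj_e D_{i,j}$ can be certified. Such disjointness is equivalent to the compact slab
\[
K_{p, \delta}(e) := \overline{B_j(0)} \cap \{y \in \R^n : p - \delta \leq e \cdot y \leq p + \delta\}
\]
being disjoint from $C_i$. Because $\R^n \setminus C_i$ is presented as an $A$-c.e.\ union of rational open balls, this in turn is equivalent, by compactness of $K_{p, \delta}(e)$, to the existence of a finite subcover. A semi-algorithm relative to $(A, e)$ would search over stages $(s, m) \in \N^2$: at stage $(s, m)$ it computes an $m$-bit rational approximation $\widetilde{e}$ of $e$, forms a rational outer approximation of $K_{p, \delta}(e)$ using $\widetilde{e}$, and tests whether the first $s$ balls in the enumeration of $\R^n \setminus C_i$ cover this outer approximation. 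The containment test, between a rational semialgebraic set and a finite union of rational open balls, is decidable; whenever it succeeds the algorithm outputs $(p - \delta, p + \delta)$.

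The main obstacle is the mismatch between the exact set $K_{p, \delta}(e)$ and the approximate information about $e$ available to the procedure. I would resolve it by designing the outer approximation to (a) always contain $K_{p, \delta}(e)$, so any interval we output is genuinely disjoint from $\proj_e D_{i,j}$, and (b) shrink down to $K_{p, \delta}(e)$ as $m \to \infty$, so that whenever $K_{p, \delta}(e) \cap C_i = \emptyset$, the open-cover condition---being stable under small perturbations of the covered compact set---is eventually verified at some finite stage. With this step in hand, the uniform $(A, e)$-c.e.\ presentations of the complements of the sets $\proj_e D_{i,j}$ exhibit $\proj_e E$ as a $\Sigma_2^0$ set relative to $(A, e)$, completing the argument.
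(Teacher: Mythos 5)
The paper does not actually prove this observation; it only cites the closure of $\Sigma$ pointclasses under computable projections and moves on. Your argument is a correct proof, and it makes explicit the one step that cannot be skipped: the decomposition into compact pieces $D_{i,j} = C_i \cap \overline{B_j(0)}$. This is essential because projections do \emph{not} preserve computably closed sets in general---the projection of a closed, non-compact set onto a line need not even be closed---and it is precisely the $\sigma$-compactness of $\R^n$ that rescues the statement. Once the decomposition is in place, the rest of your argument is the standard computable-analysis fact that the image of an effectively compact set under a function computable relative to the oracle (here $y \mapsto e \cdot y$, computable relative to $e$) is co-c.e.\ closed relative to $(A,e)$, realized concretely via finite subcovers and rational outer approximations of $e$. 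This is a legitimate unpacking of what the paper treats as a black box.

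One small imprecision worth flagging: the disjointness of the open interval $(p-\delta,\,p+\delta)$ from $\proj_e D_{i,j}$ is \emph{not} equivalent to the disjointness of the closed slab $K_{p,\delta}(e)$ from $C_i$---only the implication from the latter to the former holds (consider a point $y \in D_{i,j}$ with $e\cdot y$ exactly equal to $p+\delta$). This does not damage the argument, since soundness needs only that implication, and completeness follows because any $x$ outside the compact set $\proj_e D_{i,j}$ lies at positive distance from it, so one can choose rational $p,\delta$ with the \emph{closed} interval $[p-\delta,\,p+\delta]$ already disjoint from $\proj_e D_{i,j}$ while $x \in (p-\delta,\,p+\delta)$; for that choice $K_{p,\delta}(e)\cap C_i=\emptyset$ does hold and your stage-$(s,m)$ search eventually certifies it. Stating it as a one-way implication plus this completeness observation would make the write-up airtight.
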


Finally, we must ensure that $e$ does not significantly change the complexity of $z$. For this, we will use the following definition and theorem due to Calude and Zimand~\cite{CalZim10}. We rephrase their work in terms of points in Euclidean space. Let $n \in N$, $z \in \R^n$ and $e \in S^{n-1}$. We say that $z$ and $e$ are \emph{independent} if, for every $r \in \N$, $K^e_r(z) \geq K_r(z) - O(\log r)$ and $K^z_r(e) \geq K_r(e) - O(\log r)$.
\begin{thm}[\cite{CalZim10}]\label{thm:CalZimInd}
For every $z \in \R^n$, for almost every $e \in S^{n-1}$, $z$ and $e$ are independent.
\end{thm}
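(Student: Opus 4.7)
The plan is to deduce the statement from two ingredients: Martin-Löf randomness on the sphere relative to $z$, and a symmetry-of-information identity at precision $r$. First I would fix $z \in \R^n$ and let $R_z \subseteq S^{n-1}$ denote the set of points $e$ that are Martin-Löf random, relative to $z$, with respect to the $(n-1)$-dimensional surface measure on $S^{n-1}$. By standard results in algorithmic randomness, $R_z$ has full surface measure, so it would suffice to prove that every $e \in R_z$ is independent of $z$ in the sense of the theorem.

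For such an $e$, the relativized ML-randomness hypothesis supplies a uniform lower bound $K^z_r(e) \geq (n-1)r - O(\log r)$ valid at every precision $r$. Complementing this, every point on the sphere admits the matching upper bound $K_r(e) \leq (n-1)r + O(\log r)$: after identifying the coordinate $e_i$ of largest magnitude (necessarily bounded below by $1/\sqrt{n}$), the remaining $n-1$ coordinates determine $e$ through $e_i = \pm\sqrt{1-\sum_{j\neq i}e_j^2}$ with Lipschitz-bounded error, so a precision-$(r+O(\log n))$ description of those coordinates yields a precision-$2^{-r}$ approximation of $e$. Combining these two bounds gives $K^z_r(e) \geq K_r(e) - O(\log r)$, one of the two defining inequalities of independence.

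For the reverse inequality $K^e_r(z) \geq K_r(z) - O(\log r)$, I would lift the discrete symmetry-of-information identity $K(\sigma,\tau) = K(\sigma) + K(\tau \mid \sigma, K(\sigma)) + O(1)$ to rational approximations of $z$ and $e$ at precision $r$, obtaining a real-valued version that equates $K_r(z) + K_r(e \mid z)$ with $K_r(e) + K_r(z \mid e)$ up to $O(\log r)$ error. Substituting the inequality $K^z_r(e) \geq K_r(e) - O(\log r)$ established in the previous step and rearranging then yields $K^e_r(z) \geq K_r(z) - O(\log r)$.

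The main obstacle is the careful bookkeeping in the precision-parameterized symmetry of information: the conversion between discrete and real-valued complexities involves auxiliary inputs such as $K_r(\cdot)$ and rounding to rational approximations, each of which introduces logarithmic error terms that must all be absorbed into the $O(\log r)$ slack permitted by the independence definition. A secondary technical point is fixing a representation of the surface measure on $S^{n-1}$ so that the relativized ML-randomness hypothesis translates uniformly, at every precision $r$ simultaneously, into the Kolmogorov complexity lower bound $K^z_r(e) \geq (n-1)r - O(\log r)$ claimed above.
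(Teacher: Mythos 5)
This theorem is cited from Calude and Zimand in the paper without proof, so there is no in-paper argument to compare against; I'll assess your sketch on its own merits. Your treatment of the first inequality, $K^z_r(e) \geq K_r(e) - O(\log r)$, is sound: Martin-L\"of randomness of $e$ relative to $z$ with respect to surface measure gives $K^z_r(e) \geq (n-1)r - O(\log r)$, and the matching upper bound $K_r(e) \leq (n-1)r + O(\log r)$, valid for every point of $S^{n-1}$, finishes it.

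The second inequality, $K^e_r(z) \geq K_r(z) - O(\log r)$, is where you have a genuine gap. Your symmetry-of-information step does correctly produce the precision-conditional bound $K_r(z \mid e) \geq K_r(z) - O(\log r)$: since $K^z_r(e) \leq K_r(e \mid z)$, you get $K_r(e \mid z) \geq K_r(e) - O(\log r)$, and substituting into the identity $K_r(z) + K_r(e \mid z) = K_r(e) + K_r(z \mid e) + O(\log r)$ (Lemma~\ref{lem:unichain}(i) in the appendix) gives the bound on $K_r(z\mid e)$. But the definition of independence requires $K^e_r(z)$, which grants the machine the \emph{entire} binary expansion of $e$ as an oracle, and the inequality between the two quantities runs the wrong way: the full oracle is at least as powerful as any finite-precision rational approximation, so $K^e_r(z) \leq K_{r,s}(z \mid e)$ for every $s$. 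A lower bound on $K_r(z \mid e)$ therefore does not yield a lower bound on $K^e_r(z)$. The concern is concrete: bits of $e$ far beyond precision $r$ could encode a short description of a precision-$r$ approximation of $z$, and a symmetry-of-information identity pinned to a single precision cannot rule this out. This is precisely the nontrivial content of the Calude--Zimand theorem, which is proved by a direct measure-theoretic counting argument (bounding, program by program and precision by precision, the measure of oracles that permit compressing the fixed $z$, then summing), not by a symmetry-of-information manipulation. When $z$ is itself random one can pass from ``$e$ random relative to $z$'' to the needed full-oracle bound via van Lambalgen's theorem, but for arbitrary $z$ that route is closed, and your sketch does not supply a substitute argument for the step from $K_r(z\mid e)$ to $K^e_r(z)$.
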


With these ingredients we can give a new proof Marstrand's projection theorem using algorithmic information theory.
\begin{proof}[Proof of Theorem \ref{thm:main1}]
Let $E \subseteq \R^n$ be analytic with $\dimH(E) = s$. By Lemma \ref{lem:Fsigma}, there is a $\mathbf{\Sigma}^0_2$ set $F \subseteq E$ such that $\dimH(F) = s$. Let $A \subseteq \N$ be an oracle such that $F$ is $\Sigma_2^0$ relative to $A$. Hence, we can apply Theorem~\ref{thm:strongPointToSetDim}, which tells us that
\[\dim_H(F)=\sup_{x\in F}\dim^A(x)\,.\]
In particular, for every $k \in \N$, we may choose a point $z_k \in F$ such that 
\begin{equation}\label{eq:zkchoice}
	\dim^A(z_k) \geq s - 1/k\,.
\end{equation}

Let $e \in S^{n-1}$ be a point such that, relative to $A$, for all $k\in\N$, $e$ is random relative to $z_k$ and $e$ and $z_k$ are independent. It is a basic fact of algorithmic randomness that almost every $e$ satisfies the former condition and that such an $e$ will be nonzero in all coordinates. Theorem~\ref{thm:CalZimInd}, taken relative to $A$, tells us that almost every $e$ also satisfies the latter condition.

Fix $k\in\N$, let $z=z_k$, let $\eta\in \Q \cap (0, \min\{\dim(z),1\})$, let $\varepsilon\in\Q\cap(0,1)$, and let $r$ be sufficiently large that the $O(\log r)$ term from Theorem~\ref{thm:CalZimInd} is less than $\varepsilon r$. We now show that our choices of $z$, $e$, $A$, $\eta$, $\varepsilon$, and $r$ satisfy the two conditions of Theorem~\ref{thm:mainengine}. For the first condition, let $t\leq r$, let $1\leq i<j\leq n$, and let $e'\in\R^{n-2}$ consist of $e$ with its $i$\textsuperscript{th} and $j$\textsuperscript{th} coordinates removed. Then
\begin{align*}
	K_t(e\mid e')&\geq K^{A,z}_t(e\mid e')-O(1)\tag*{[oracles cannot significantly increase complexity]}\\
	&\geq K^{A,z}_t(e)-K^{A,z}_t(e')-O(\log t)\tag*{[Lemma~\ref{lem:unichain}]}\\
	&\geq K^{A,z}_t(e)-(n-2)(t+O(\log t))-O(\log t)\tag*{[$e'\in\R^{n-2}$]}\\
	&= K^{A,z}_t(e)-(n-2)t-O(\log t)\tag*{[$n$ is a constant]}\\
	&\geq (n-1)t-O(1)-(n-2)t-O(\log t)\tag*{[$e$ is random relative to $(A,z)$]}\\
	&=t-O(\log t),
\end{align*}
so the first condition of Theorem~\ref{thm:mainengine} is satisfied. The second condition is also satisfied because we have
\begin{align*}
	K^{A, e}_r(z)&\geq K^A_r(z)-O(\log r)\tag*{[$e$ and $z$ are independent]}\\
	&> K^A_r(z)-\varepsilon r\,.\tag*{[$r$ was chosen to be sufficiently large]}
\end{align*}
Hence, we may apply Theorem \ref{thm:mainengine}.

We now have
\begin{align*}
\dim^{A, e}(e \cdot z) &= \liminf\limits_{r \rightarrow \infty} \frac{K^{A, e}_r(e \cdot z)}{r}\tag*{[definition of $\dim$]}\\
&\geq \liminf\limits_{r \rightarrow \infty} \frac{\eta r - \ve r -\frac{2n\ve}{1-\eta}r-K(2\ve)-K(\eta)-O_{z}(\log r)}{r}\tag*{[Theorem~\ref{thm:mainengine}]}\\
&= \eta - \ve - \frac{2n\ve}{1-\eta}\,.\tag*{[$\varepsilon$ and $\eta$ are constants]}
\end{align*}
Thus, letting, $\varepsilon$ approach 0 and $\eta$ approach
\begin{align*}
	\min\{\dim(z),1\}&\geq\min\{\dim^A(z),1\}\tag*{[oracles cannot increase dimension]}\\
	&\geq\min\{s-1/k,1\}\,,
\end{align*}
and recalling that $z=z_k$, yields
\begin{equation}\label{eq:dimez}
	\dim^{A,e}(e\cdot z_k)\geq\min\{s-1/k,1\}\,.
\end{equation}

This holds for all $k\in\N$, and Observation~\ref{obs:compClosedOracle} tells us that $\proj_e F$ is $\Sigma_0^2$ relative to $(A,e)$, so we have
\begin{align*}
\dimH(\proj_e E) &\geq \dimH(\proj_e F)\tag*{[$F\subseteq E$]}\\
&=\sup_{y\in \proj_e F}\dim^{A,e}(y)\tag*{[Theorem~\ref{thm:strongPointToSetDim}]}\\
&= \sup\limits_{x \in F} \dim^{A, e}(e \cdot x )\tag*{[definition of $\proj_e$]}\\
&\geq \min\{s-1/k,1\}\tag*{[inequality~\eqref{eq:dimez}]}\\
&= \min\{s, 1\}\,.
\end{align*}
Since this holds for almost every $e\in S^{n-1}$, the proof is complete.
\end{proof}

\section{Conclusion}
In this paper we studied how the Hausdorff dimension of a set is changed under orthogonal projections onto a line. This question and its generalizations are of fundamental importance in geometric measure theory. We gave a new proof of Marstrand's projection theorem and two extensions to that theorem. Recently, Orponen \cite{Orponen21} used combinatorial techniques to give new proofs of Theorems~\ref{thm:main2} and~\ref{thm:main3}. This is an exciting development which gives hope for future interactions between the algorithmic and classical perspectives of geometric measure theory.

The main methodological contribution of this paper is the technique used in our proof of Marstrand's projection theorem. Specifically, in order to translate that theorem into its pointwise analogue, we used the point-to-set principle for $\Sigma^0_2$ sets (Theorem~\ref{thm:strongPointToSetDim}), due to Hitchcock~\cite{Hitc05} and J. Lutz~\cite{Lutz03a}. This more agile point-to-set approach allowed us to prove a theorem that does \emph{not} hold for arbitrary sets, and we therefore expect it to have further applications in geometric measure theory.

It is likely that further progress into related questions can be made with the use of the algorithmic perspective. One natural direction is to develop a new, point-to-set proof of the Marstrand's \emph{second} projection theorem, which states that, if $E$ is analytic and $\dim_H(E) > 1$, then for almost every unit vector $e$, the projected set $\proj_e(E)$ has positive measure. Since Theorems~\ref{thm:strongPointToSetDim} and~\ref{thm:p2s} only address dimension, not measure, this would require some refinement or extension of the point-to-set principle.

\bibliography{MPT}

\begin{thebibliography}{10}

\bibitem{AHLM07}
Krishna~B. Athreya, John~M. Hitchcock, Jack~H. Lutz, and Elvira Mayordomo.
\newblock {Effective strong dimension in algorithmic information and
  computational complexity}.
\newblock {\em SIAM J. Comput.}, 37(3):671--705, 2007.

\bibitem{Besicovitch52}
A.~S. Besicovitch.
\newblock On existence of subsets of finite measure of sets of infinite
  measure.
\newblock {\em Indag. Math.}, 14:339--344, 1952.

\bibitem{CalZim10}
Cristian~S. Calude and Marius Zimand.
\newblock Algorithmically independent sequences.
\newblock {\em Inf. Comput.}, 208(3):292--308, 2010.

\bibitem{Davies52}
R.~O. Davies.
\newblock Subsets of finite measure in analytic sets.
\newblock {\em Indag. Math.}, 14:448--489, 1952.

\bibitem{Davi71}
Roy~O. Davies.
\newblock Some remarks on the {K}akeya problem.
\newblock {\em Proc. Cambridge Phil. Soc.}, 69:417--421, 1971.

\bibitem{Davies79}
Roy~O. Davies.
\newblock Two counterexamples concerning {H}ausdorff dimensions of projections.
\newblock {\em Colloq. Math.}, 42:53--58, 1979.

\bibitem{FH96}
K.~J. Falconer and J.~D. Howroyd.
\newblock Projection theorems for box and packing dimensions.
\newblock {\em Math. Proc. Cambridge Philos. Soc.}, 119(2):287--295, 1996.

\bibitem{FH97}
K.~J. Falconer and J.~D. Howroyd.
\newblock Packing dimensions of projections and dimension profiles.
\newblock {\em Math. Proc. Cambridge Philos. Soc.}, 121(2):269--286, 1997.

\bibitem{Falc14}
Kenneth Falconer.
\newblock {\em Fractal Geometry: Mathematical Foundations and Applications}.
\newblock Wiley, third edition, 2014.

\bibitem{FalFraJin15}
Kenneth Falconer, Jonathan Fraser, and Xiong Jin.
\newblock Sixty years of fractal projections.
\newblock In {\em Fractal Geometry and Stochastics {V}}, volume~70 of {\em
  Progr. Probab.}, pages 3--25. Birkh\"auser/Springer, Cham, 2015.

\bibitem{FM96}
Kenneth~J. Falconer and Pertti Mattila.
\newblock The packing dimension of projections and sections of measures.
\newblock {\em Math. Proc. Cambridge Philos. Soc.}, 119(4):695--713, 1996.

\bibitem{Hitc05}
John~M. Hitchcock.
\newblock {Correspondence principles for effective dimensions}.
\newblock {\em Theory of Computing Systems}, 38(5):559--571, 2005.

\bibitem{Howroyd01}
J.~D. Howroyd.
\newblock Box and packing dimensions of projections and dimension profiles.
\newblock {\em Math. Proc. Cambridge Philos. Soc.}, 130(1):135--160, 2001.

\bibitem{Jarv94}
Maarit J\"arvenp\"a\"a.
\newblock On the upper {M}inkowski dimension, the packing dimension, and
  orthogonal projections.
\newblock {\em Ann. Acad. Sci. Fenn. Ser. A I Math. Dissertationes}, 99:34,
  1994.

\bibitem{Kaufman68}
Robert Kaufman.
\newblock On {H}ausdorff dimension of projections.
\newblock {\em Mathematika}, 15:153--155, 1968.

\bibitem{LiVit08}
Ming Li and Paul~M.B. Vit\'{a}nyi.
\newblock {\em An Introduction to {K}olmogorov Complexity and Its
  Applications}.
\newblock Springer, third edition, 2008.

\bibitem{Lutz03a}
Jack~H. Lutz.
\newblock Dimension in complexity classes.
\newblock {\em {SIAM} J. Comput.}, 32(5):1236--1259, 2003.

\bibitem{LutLut18}
Jack~H. Lutz and Neil Lutz.
\newblock Algorithmic information, plane {K}akeya sets, and conditional
  dimension.
\newblock {\em ACM Trans. Comput. Theory}, 10(2):7:1--7:22, May 2018.

\bibitem{LutMay08}
Jack~H. Lutz and Elvira Mayordomo.
\newblock Dimensions of points in self-similar fractals.
\newblock {\em SIAM J. Comput.}, 38(3):1080--1112, 2008.

\bibitem{Lutz21}
Neil Lutz.
\newblock Fractal intersections and products via algorithmic dimension.
\newblock {\em ACM Transactions on Computation Theory}, 13(3), August 2021.

\bibitem{LutStu20}
Neil Lutz and D.~M. Stull.
\newblock Bounding the dimension of points on a line.
\newblock {\em Information and Computation}, 275, December 2020.

\bibitem{Marstrand54}
J.~M. Marstrand.
\newblock Some fundamental geometrical properties of plane sets of fractional
  dimensions.
\newblock {\em Proc. London Math. Soc. (3)}, 4:257--302, 1954.

\bibitem{Mattila75}
Pertti Mattila.
\newblock Hausdorff dimension, orthogonal projections and intersections with
  planes.
\newblock {\em Ann. Acad. Sci. Fenn. Ser. A I Math.}, 1(2):227--244, 1975.

\bibitem{Matt99}
Pertti Mattila.
\newblock {\em Geometry of Sets and Measures in {E}uclidean Spaces: Fractals
  and Rectifiability}.
\newblock Cambridge University Press, 1999.

\bibitem{Mattila14}
Pertti Mattila.
\newblock Recent progress on dimensions of projections.
\newblock In {\em Geometry and Analysis of Fractals}, volume~88 of {\em
  Springer Proc. Math. Stat.}, pages 283--301. Springer, Heidelberg, 2014.

\bibitem{Mayo02}
Elvira Mayordomo.
\newblock A {K}olmogorov complexity characterization of constructive
  {H}ausdorff dimension.
\newblock {\em Inf. Process. Lett.}, 84(1):1--3, 2002.

\bibitem{Orp15}
Tuomas Orponen.
\newblock On the packing dimension and category of exceptional sets of
  orthogonal projections.
\newblock {\em Ann. Mat. Pura Appl. (4)}, 194(3):843--880, 2015.

\bibitem{Orponen21}
Tuomas Orponen.
\newblock Combinatorial proofs of two theorems of {L}utz and {S}tull.
\newblock {\em Mathematical Proceedings of the Cambridge Philosophical
  Society}, 171(3):503–514, 2021.

\end{thebibliography}

\appendix
\renewcommand{\thesection}{\Alph{section}}
\section{Appendix}

The following technical lemma, which is used in the proof of Lemma~\ref{lem:lowerBoundOtherPoints}, shows that, given access to $n-2$ coordinates of a unit vector $e\in S^{n-1}$, together with approximations of two points $w$ and $z$ that have the same projection in the direction of $e$, a Turing machine can apply the quadratic formula to approximate the remaining coordinates of $e$. Although this fact is geometrically unsurprising, verifying it requires some cumbersome---but essentially routine---calculations, due to the large number of parameters and approximations involved. We reiterate that this added complexity is only needed for the case $n\geq 3$. 

\begin{lem}\label{lem:quadratic}
There is an oracle Turing machine $M$ with the following property. Suppose $w,z\in\R^n$ and $e\in S^{n-1}$ are such that $t=-\log\|w-z\|\leq r-16-3\log n-\log|e_2|$, $e$ is nonzero in all coordinates, and $e\cdot w=e\cdot z$; $p = (p_1, \ldots, p_n) \in \Q^n\cap B_{2^{-r}}(z)$, $r\in\N$, and $\pi_q$ is a witness to 
\[\hat{K}_r(w \mid q) = \min\{K(p \mid q) \, : \, p \in B_{2^{-r}}(w) \cap \Q^n\}\,.\]
Then there are some $h_1,h_2\in\{0,1\}$ and some $1\leq i<j\leq n$ such that
\[\left|M^{e'}(\pi_q,q,h_1,h_2)-e\right|\leq 2^{\alpha+t-r}\,,\]
where $\alpha$ is a constant depending only on $e$ and $e'\in\R^{n-2}$ is $e$ with its $i$\textsuperscript{th} and $j$\textsuperscript{th} coordinates removed.

\end{lem}

\begin{proof}
Let $w$, $z$, $e$, $p$, $r$, and $\pi_q$ be as in the lemma statement. We first choose $i$ so that $|z_i - w_i|$ is maximal. We then choose $j$ so that
\begin{enumerate}
    \item $\sgn((z_i - w_i)e_i) \neq \sgn((z_j - w_j)e_j),$ and
    \item $|z_j - w_j| > 0$
\end{enumerate}
where $\sgn$ denotes the sign. Note that such a $j$ must exist since $(z - w) \cdot e = 0$. In order to remove notational clutter, we will assume, without loss of generality, that $i = 1$ and $j = 2$, so $e'=(e_3,\ldots,e_n)$.

The following system of equations holds by our choices of $e$, $z$, and $w$.
\begin{align*}
e\cdot(z - w)  &= 0 \\
e_1^2 + \ldots +e_n^2 &= 1\,,
\end{align*}
Thus we can solve for $e_2$ using the quadratic formula
\begin{equation}\label{eq:quadForm}
e_2 = \frac{-b + (-1)^{h_2} \sqrt{b^2 - 4ac}}{2a}\,,
\end{equation}
where 
\begin{itemize}
\item $\displaystyle h_2 \in \{0,1\}$,
\item $\displaystyle a = (w_1 - z_1)^2 + (w_2 - z_2)^2$,
\item $\displaystyle b = 2(w_2 - z_2)\sum_{k=3}^n (w_k - z_k)e_k$, and
\item $\displaystyle c = \left(\sum_{k=3}^n (w_k - z_k)e_k\right)^2 - (w_1 - z_1)^2 \left(1-\sum_{k=3}^n e_k^2\right)$.
\end{itemize}

With this in mind, let $M$ be the oracle Turing machine such that, whenever $q = (q_1, \ldots, q_n) \in \Q^n$ and $U(\pi, q) = p = (p_1,\ldots, p_n) \in \Q^n$ with $p_1 \neq q_1$, and $h_1,h_2\in\{0,1\}$, $M^{e'}(\pi, q, h_1,h_2)$ does the following.
\begin{itemize}
    \item Choose $\displaystyle d = (d_3,\ldots, d_n) \in \Q^{n-2}\cap B_{2^{-r}}(e')$.
    \item Calculate $\displaystyle \hat{a} = (q_1 - p_1)^2 + (p_2 - q_2)^2$.
    \item Calculate $\displaystyle \hat{b} = 2(p_2 - q_2)\sum_{k=3}^n (p_k - q_k)d_k$.
    \item Calculate $\displaystyle \hat{c} = \left(\sum_{k=3}^n (p_k - q_k)d_k\right)^2 - (p_1 - q_1)^2\left(1-\sum_{k=3}^n d_k^2\right)$.
    \item Calculate $\displaystyle \hat{e}_2=\frac{-\hat{b} + (-1)^{h_2} \sqrt{\hat{b}^2 - 4\hat{a} \hat{c}}}{2\hat{a}}$.
    \item Calculate $\displaystyle \hat{e}_1=(-1)^{h_1}\sqrt{1-\hat{e}_2^2-d_3^2-\cdots-d_n^2}$.
    \item Output $\displaystyle \hat{e}=(\hat{e}_1,\hat{e}_2,d_3,\ldots,d_n)$.
\end{itemize}

To show that $\hat{e}$ is a good approximation of $e$ (assuming the correct choices of $h_1$ and $h_2$), we begin by bounding $|a-\hat{a}|$:
\begin{align*}
	|a-\hat{a}|&=|(w_1 - z_1)^2 + (w_2 - z_2)^2-(p_1 - q_1)^2 - (p_2 - q_2)^2|\\
	&\leq |(w_1 - z_1)^2-(p_1 - q_1)^2|+|(w_2 - z_2)^2-(p_2 - q_2)^2|\,.
\end{align*}
We bound the first term:
\begin{align*}
	|(w_1 - z_1)^2-(p_1 - q_1)^2|&=|(w_1-z_1)+(p_1-q_1)|\cdot|(w_1-z_1)-(p_1-q_1)|\\
	&\leq (|w_1-z_1|+|p_1-q_1|)\cdot(|w_1-p_1|+|z_1-q_1|)\\
	&\leq (2|w_1-z_1|+|p_1-w_1|+|q_1-z_1|)\cdot(|w_1-p_1|+|z_1-q_1|)\\
	&< (2^{1-t}+2^{1-r})\cdot(2^{1-r})\\
	&\leq 2^{3-r-t}\,.
\end{align*}
By identical reasoning, $|(w_2 - z_2)^2-(p_2 - q_2)^2|<2^{3-r-t}$, so we have
\begin{equation}\label{eq:boundonaminusa}
    |a-\hat{a}|<2^{4-r-t}
\end{equation}
Similarly, we bound $|b-\hat{b}|$:
\begin{align*}
		|b-\hat{b}|&=2\left |(w_2-z_2)\sum_{k=3}^n (w_k - z_k)e_k-(p_2 - q_2)\sum_{k=3}^n (p_k - q_k)d_k\right|\\
		&\leq 2\left|(w_2-z_2)\sum_{k=3}^n \big((w_k - z_k)e_k-(p_k - q_k)d_k\big)\right|+2\left|\big((w_2-z_2)-(p_2-q_2)\big)\sum_{k=3}^n (p_k - q_k)d_k\right|\\
		&< 2^{1-t}\left|\sum_{k=3}^n \big((w_k - z_k)e_k-(p_k - q_k)d_k\big)\right|+2^{2-r}\left|\sum_{k=3}^n (p_k - q_k)d_k\right|\\
		&\leq 2^{1-t}\left|\sum_{k=3}^n \big((w_k - z_k)e_k-(p_k - q_k)d_k\big)\right|+2^{2-r}(n-2)(2^{-t}+2^{1-r})\\
		&< 2^{1-t}\left|\sum_{k=3}^n \big((w_k - z_k)e_k-(p_k - q_k)d_k\big)\right|+2^{4-r-t+\log n}\\
		&\leq 2^{1-t}\left(\left|\sum_{k=3}^n (w_k - z_k)(e_k-d_k)\right|+\left|\sum_{k=3}^n(w_k-z_k-p_k+q_k)d_k\right|\right)+2^{4-r-t+\log n}\\
		&< 2^{1-t}\left((n-2)(2^{-t})(2^{-r})+(n-2)(2^{1-r})\right)+2^{4-r-t+\log n}\\
		&< 2^{3-r-t+\log n}+2^{4-r-t+\log n}\\
		&<2^{5-r-t+\log n}\,.\stepcounter{equation}\tag{\theequation}\label{eq:boundonbminusb}
\end{align*}

The difference
\[|c-\hat{c}|=\left|\left(\sum_{k=3}^n (w_k - z_k)e_k\right)^2 - (w_1 - z_1)^2 \left(1-\sum_{k=3}^n e_k^2\right)-\left(\sum_{k=3}^n (p_k - q_k)d_k\right)^2 + (p_1 - q_1)^2\left(1-\sum_{k=3}^n d_k^2\right)\right|\]
is unwieldy, so we handle this in pieces. First,
\begin{align*}
    \left|\sum_{k=3}^n(w_k - z_k)e_k-(p_k - q_k)d_k\right|&\leq\sum_{k=3}^n\left|(w_k-z_k-p_k+q_k)d_k\right|+\sum_{k=3}^n\left|(w_k-z_k)(e_k-d_k)\right|\\
    &< (n-2)2^{1-r}+(n-2)2^{-t}2^{-r}\\
    &<2^{2-r+\log n}\,,
\end{align*}
and
\begin{align*}
    \left|\sum_{k=3}^n(w_k - z_k)e_k+(p_k - q_k)d_k\right|&\leq \left|\sum_{k=3}^n(w_k - z_k)\right|+\left|\sum_{k=3}^n(p_k - q_k)\right|\\
    &\leq (n-2)2^{-t}+(n-2)(2^{-t}+2^{1-r})\\
    &<2^{2-t+\log n}
\end{align*}
so we have
\begin{align*}
    \left|\left(\sum_{k=3}^n (w_k - z_k)e_k\right)^2-\left(\sum_{k=3}^n (p_k - q_k)d_k\right)^2\right|&<2^{2-r+\log n}\cdot 2^{2-t+\log n}\\
    &=2^{4-r-t+2\log n}\,.\stepcounter{equation}\tag{\theequation}\label{eq:boundoncminuscpt1}
\end{align*}
Second,
\begin{align*}
    \left|(w_1-z_1)^2\left(\left(1-\sum_{k=3}^n e_k^2\right)-\left(1-\sum_{k=3}^n d_k^2\right)\right)\right|&=\left|(w_1-z_1)^2\sum_{k=3}^n (e_k-d_k)(e_k+d_k)\right|\\
    &< 2^{-2t}(n-2)2^{-r}(2+2^{-r})\\
    &<2^{2-2t-r+\log n}\,,
\end{align*}
and
\begin{align*}
    \left|\left((w_1-z_1)^2-(p_1-q_1)^2\right)\left(1-\sum_{k=3}^n d_k^2\right)\right|
    &\leq \left|(w_1-z_1)^2-(p_1-q_1)^2\right|\\
    &= |w_1-z_1-p_1+q_1||w_1-z_1+p_1-q_1|\\
    &< (2^{1-r})(2^{1-t}+2^{1-r})\\
    &\leq 2^{3-r-t}\,,
\end{align*}
so we have
\begin{align*}
    \left|(w_1 - z_1)^2 \left(1-\sum_{k=3}^n e_k^2\right)-(p_1-q_1)^2\left(1-\sum_{k=3}^n d_k^2\right)\right|&< 2^{2-2t-r+\log n}+2^{3-r-t}\\
    &\leq 2^{3-r-t+\log n}\,.\stepcounter{equation}\tag{\theequation}\label{eq:boundoncminuscpt2}
\end{align*}
Combining inequalities~\eqref{eq:boundoncminuscpt1} and~\eqref{eq:boundoncminuscpt2} yields
\begin{align*}
    |c-\hat{c}|&<2^{4-r-t+2\log n}+ 2^{3-r-t+\log n}\\
    &<2^{5-r-t+2\log n}\,.\stepcounter{equation}\tag{\theequation}\label{eq:boundoncminusc}
\end{align*}





We now bound $a, b$ and $c$. By our assumption that $|z_1-w_1|$ is maximal and the fact that $\|w-z\|=2^{-t}$, we have
\begin{align*}
a &= (z_1 - w_1)^2 + (z_2 - w_2)^2 \\
&\geq (z_1 - w_1)^2\\
&\geq \frac{2^{-2t}}{n}\\
&= 2^{-2t - \log n}\,.
\end{align*}
We also have $a \leq \|w-z\|^2= 2^{-2t}$, resulting in 
\begin{equation}\label{eq:boundona}
2^{-2t-\log n} \leq a \leq 2^{-2t}\,.
\end{equation}
We can combine this with the bound on $\vert a - \hat{a}\vert$ to conclude
\[2^{-2t - \log n} - 2^{4-r-t} \leq \vert \hat{a} \vert\,.\]
Under the assumption that $t \leq r - 16 - 3\log n - \log |e_2|$, we conclude that
\[2^{-2t - \log n - 1} \leq \vert \hat{a} \vert\,.\]
Similarly, we bound $b$ by deducing
\begin{align*}
    |b|& = \left|2(w_2 - z_2)\sum_{k=3}^n (w_k - z_k)e_k\right|\\
    &\leq 2^{1-t}\left|\sum_{k=3}^n 2^{-t}\right|\\
    &=(n-2)\cdot 2^{1-2t}\\
    &<2^{1-2t+\log n}\stepcounter{equation}\tag{\theequation}\label{eq:boundonb}
\end{align*}
Finally, we bound $c$
\begin{align*}
    |c| &= \left|\left(\sum_{k=3}^n (w_k - z_k)e_k\right)^2 - (z_1 - w_1)^2\left(1-\sum_{k=3}^n e_k^2\right)\right|\\
    &\leq \left|\left(\sum_{k=3}^n (w_k - z_k)e_k\right)^2\right| + \left|(z_1 - w_1)^2\right|\\
    &\leq ((n-2)\cdot 2^{-t})^2+2^{-2t}\\
    &<2^{1-2t+2\log n}\,.\stepcounter{equation}\tag{\theequation}\label{eq:boundonc}
\end{align*}


We now show that $c < 0$. To see this, we first deduce that
\begin{align*}
c &= \left(\sum_{i=3}^n (w_i - z_i)e_i\right)^2 - (z_1 - w_1)^2\left(1-\sum_{i=3}^n e_i^2\right) \\
&= ((z_1 - w_1)e_1 + (z_2 - w_2)e_2)^2 - (z_1 - w_1)^2 + (z_1 - w_1)^2(1-e_1^2 - e_2^2)\\
&= ((z_1 - w_1)e_1 + (z_2 - w_2)e_2)^2  + (z_1 - w_1)^2(-e_1^2 - e_2^2)\\
&= 2(z_1 - w_1)(z_2 - w_2)e_1 e_2 + e_2^2((z_2 - w_2)^2 - (z_1-w_1)^2).
\end{align*}
Since $|z_1 - w_1|$ is maximal,
\begin{equation*}
    e_2^2((z_2 - w_2)^2 - (z_1-w_1)^2) \leq 0.
\end{equation*}
Since $\sgn((z_1 - w_1)e_1) \neq \sgn((z_2 - w_2)e_2)$, 
\begin{equation*}
    (z_1 - w_1)(z_2 - w_2)e_1 e_2 < 0.
\end{equation*}
Therefore the claim that $c < 0$ has been verified.

Since $c < 0$, there are two solutions to the quadratic formula~\eqref{eq:quadForm}. Let $e_2$ and $\tilde{e}_2$ be the two solutions to our quadratic formula. Then
\begin{equation}\label{eq:e2timese2tilde}
    e_2 \tilde{e}_2 =\frac{c}{a}\,,
\end{equation}
and
\begin{equation}\label{eq:e2minuse2tilde}
    \vert e_2 - \tilde{e}_2 \vert = \frac{\sqrt{b^2 - 4 ac}}{\vert a \vert}\,.
\end{equation}
The fact that $c$ is negative, combined with equation~\eqref{eq:e2timese2tilde}, implies that exactly one of $e_2$ or $\tilde{e}_2$ is negative. Without loss of generality, assume $e_2$ is positive. Equation~\eqref{eq:e2minuse2tilde}, in conjunction with inequality~\eqref{eq:boundona}, implies that
\begin{equation*}
2^{-2t - \log n} \vert e_2 - \tilde{e}_2 \vert \leq \sqrt{b^2 - 4 ac} \leq 2^{-2t} \vert e_2 - \tilde{e}_2 \vert\,.
\end{equation*}
Since $e_2$ is positive and $\tilde{e}_2$ is negative, 
\begin{equation}\label{eq:boundonsqrtsize}
2^{-2t + \log n} \vert e_2 \vert \leq \sqrt{b^2 - 4 ac} \leq 2^{-2t}\vert e_2 + 1\vert\,.
\end{equation}

Let $\alpha, \beta > 0$. Then it can easily be seen that
\begin{equation*}
\left\vert \sqrt{\alpha} - \sqrt{\beta} \right\vert = \frac{\alpha - \beta}{\sqrt{\alpha} + \sqrt{\beta}}\,.
\end{equation*}
Using this fact, we have
\begin{align*}
\left\lvert \sqrt{b^2 - 4ac} - \sqrt{\hat{b}^2 - 4\hat{a} \hat{c}} \right\rvert &= \frac{\vert b^2 - 4ac - \hat{b}^2 + 4\hat{a} \hat{c} \vert}{\sqrt{b^2 - 4ac} + \sqrt{\hat{b}^2 - 4\hat{a} \hat{c}}}\\
&\leq \frac{|b-\hat{b}||b+\hat{b}|+4|a-\hat{a}||c|-4|c-\hat{c}||\hat{a}|}{\sqrt{b^2 - 4ac} + \sqrt{\hat{b}^2 - 4\hat{a} \hat{c}}}\\
&\leq \frac{|b-\hat{b}|(2|b|+|b-\hat{b}|)+4|a-\hat{a}||c|-4|c-\hat{c}|(|a|+|a-\hat{a}|)}{\sqrt{b^2 - 4ac} + \sqrt{\hat{b}^2 - 4\hat{a} \hat{c}}}\,.
\end{align*}
Applying inequalities~\eqref{eq:boundonbminusb},~\eqref{eq:boundonb},~\eqref{eq:boundonaminusa},~\eqref{eq:boundonc},~\eqref{eq:boundoncminusc}, and~\eqref{eq:boundonaminusa}, this is less than
\begin{align*}
\frac{2^{14-r-3t+3\log n}}{{\sqrt{b^2 - 4ac} + \sqrt{\hat{b}^2 - 4\hat{a} \hat{c}}}}
&\leq \frac{2^{14-r-3t+3\log n}}{\sqrt{b^2 - 4ac}}\\
&\leq \frac{2^{14-r-3t+3\log n}}{2^{-2t+\log n}|e_2|}\,,
\end{align*}
by inequality~\eqref{eq:boundonsqrtsize}.
Simplifying, we have
\begin{equation}\label{eq:boundonsqrtdiff}
    \left\lvert \sqrt{b^2 - 4ac} - \sqrt{\hat{b}^2 - 4\hat{a} \hat{c}} \right\rvert<2^{14-r-t+2\log n-\log|e_2|}\,.
\end{equation}


Now, we can write
\begin{equation}\label{eq:boundone2minuse2pt1}
    |e_2-\hat{e}_2|\leq \left|\frac{1}{2a}-\frac{1}{2\hat{a}}\right|\left(|b|+\left|\sqrt{b^2-4ac}\right|\right)+\frac{1}{2a}\left(|b-\hat{b}|+\left\lvert \sqrt{b^2 - 4ac} - \sqrt{\hat{b}^2 - 4\hat{a} \hat{c}} \right\rvert\right)\\
\end{equation}
By inequalities~\eqref{eq:boundona},~\eqref{eq:boundonb}, and~\eqref{eq:boundonsqrtsize}, together with our assumption that $t\leq r-16-3\log n-\log|e_2|$, we can bound the first term:
\begin{align*}
    \frac{1}{2}\frac{|a-\hat{a}|}{|a||\hat{a}|}\left(|b|+\left|\sqrt{b^2-4ac}\right|\right)&\leq \frac{1}{2}\frac{|a-\hat{a}|}{|a||\hat{a}|}\left(2^{1-2t+\log n}+2^{1-2t }\right)\\
    &\leq \frac{1}{2}\frac{2^{4-r - t}}{|a||\hat{a}|}\left( 2^{2-2t+\log n}\right)\\
    &\leq \frac{2^{5-r - 3t+\log n}}{|a|(|a|-|a-\hat{a}|)}\\
    &\leq \frac{2^{5-r - 3t+\log n}}{2^{-2t-\log n}(2^{-2t-\log n}-2^{4-r-t})}\\
    &\leq \frac{2^{5-r - 3t+\log n}}{2^{-2t-\log n}(2^{-1-2t-\log n})}\\
    &=2^{6-r+t+3\log n}\,.\stepcounter{equation}\tag{\theequation}\label{eq:boundone2minuse2pt2}
\end{align*}
By inequalities~\eqref{eq:boundonbminusb},~\eqref{eq:boundona}, and~\eqref{eq:boundonsqrtdiff}, we can bound the second term:
\begin{align*}
    \frac{1}{|2a|}\left(|b-\hat{b}|+\left\lvert \sqrt{b^2 - 4ac} - \sqrt{\hat{b}^2 - 4\hat{a} \hat{c}} \right\rvert\right) &< \frac{1}{|2a|}\left(|b-\hat{b}|+ 2^{14-r-t+2\log n-\log|e_2|} \right)\\
     &< \frac{1}{|2a|}\left(2^{5-r-t-\log n}+ 2^{14-r-t+2\log n-\log|e_2|} \right)\\
     &\leq  2^{2t + \log n}\left(2^{5-r-t-\log n}+ 2^{14-r-t+2\log n-\log|e_2|} \right)\\
     &\leq  2^{15-r+t+3\log n-\log|e_2|}.\stepcounter{equation}\tag{\theequation}\label{eq:boundone2minuse2pt3}
\end{align*}
Combining inequalities~\eqref{eq:boundone2minuse2pt1},~\eqref{eq:boundone2minuse2pt2}, and~\eqref{eq:boundone2minuse2pt3} gives us
\begin{align*}
    |e_2-\hat{e}_2|&\leq 2^{6-r+t+3\log n}+2^{15-r+t+3\log n-\log|e_2|}\\
    &\leq 2^{16-r+t+3\log n-\log|e_2|}\,. \stepcounter{equation}\tag{\theequation}\label{eq:boundone2minuse2}
\end{align*}


We can choose the input parameter $h_1$ such that $\hat{e}_1$ has the same sign as $e_1$, so by inequality~\eqref{eq:boundone2minuse2} and our assumption that $t\leq r-16-3\log n-\log|e_2|$,
\begin{align*}
    |e_1-\hat{e}_1|&=\frac{|e_1^2-\hat{e}_1^2|}{|e_1+\hat{e}_1|}\\
    &\geq \frac{1}{|e_1|}|e_1^2-\hat{e}_1^2|\\
    &=\frac{1}{|e_1|}\left|1-e_2^2-\sum_{k=3}^n e_k^2-\left(1-\hat{e}_2^2-\sum_{k=3}^n d_k^2\right)\right|\\
    &\leq\frac{1}{|e_1|}\left( |e_2-\hat{e}_2||e_2+\hat{e}_2|+\sum_{k=3}^n|e_k-d_k||e_k+d_k|\right)\\
    &\leq\frac{1}{|e_1|}\left(|e_2-\hat{e}_2|(|e_2|+|e_2-\hat{e}_2| )+\sum_{k=3}^n|e_k-d_k|(|e_k|+|e_k-d_k|)\right)\\
    &\leq\frac{1}{|e_1|}\left(2^{16-r+t+3\log n-\log|e_2|}\left(1+2^{16-r+t+3\log n-\log|e_2|}\right)+\sum_{k=3}^n 2^{-r}(1+2^{-r})\right)\\
    &\leq\frac{1}{|e_1|}\left(2^{17-r+t+3\log n-\log|e_2|}+2^{1-r+\log n}\right)\\
    &\leq 2^{18-r+t+3\log n-\log|e_2|-\log|e_1|}\,.\stepcounter{equation}\tag{\theequation}\label{eq:boundone1minuse1}
\end{align*}

Finally, we use inequalities~\eqref{eq:boundone2minuse2} and~\eqref{eq:boundone1minuse1} to bound $\|e-\hat{e}\|$:
\begin{align*}
    \|e-\hat{e}\|&\leq |e_1-\hat{e}_1|+|e_2-\hat{e}_2|+\sum_{k=3}^n|e_k-d_k|\\
    &< 2^{18-r+t+3\log n-\log|e_2|-\log|e_1|}+2^{16-r+t+3\log n-\log|e_2|}+2^{-r+\log n}\\
    &< 2^{19-r+t+3\log n-\log|e_2|-\log|e_1|}\,,
\end{align*}
so letting $\alpha=19-3\log n-\log|e_2|-\log|e_1|$ completes the proof.
\end{proof}

\end{document}